\makeatletter\@ifclassloaded{beamer}{}{\usepackage{imakeidx}}\makeatother
\DeclareMathOperator*{\argmin}{arg~min}
\makeatletter\@ifclassloaded{beamer}{}{}\makeatother
\newcommand{\genlab}[2]{\label{#1:#2}}
\newcommand{\genref}[2]{#1~\vref{#1:#2}}
\newcommand{\Genref}[3]{#1~\vref{#2:#3}}
\newcommand{\seclab}[1]{\genlab{section}{#1}}
\newcommand{\secref}[1]{\genref{section}{#1}}
\newcommand{\eqnlab}[1]{\genlab{equation}{#1}}
\newcommand{\eqnref}[1]{\genref{equation}{#1}}
\newcommand{\Eqnref}[1]{\Genref{Equation}{equation}{#1}}
\newcommand{\figlab}[1]{\genlab{figure}{#1}}
\newcommand{\figref}[1]{\genref{figure}{#1}}
\makeatletter\@ifclassloaded{beamer}{}{%
    \newtheorem{theorem}{Theorem}
    \newtheorem{definition}{Definition}

}
\newcommand{\thelab}[1]{\genlab{theorem}{#1}}
\newcommand{\theref}[1]{\genref{theorem}{#1}}
\newcommand{\Theref}[1]{\Genref{Theorem}{theorem}{#1}}
\newtheorem{lemma}{Lemma}
\newcommand{\lemlab}[1]{\genlab{lemma}{#1}}
\newcommand{\lemref}[1]{\genref{lemma}{#1}}
\newcommand{\Lemref}[1]{\Genref{Lemma}{lemma}{#1}}
\newtheorem{corollary}{Corollary}
\newcommand{\corlab}[1]{\genlab{corollary}{#1}}
\newcommand{\corref}[1]{\genref{corollary}{#1}}
\newcommand{\Corref}[1]{\Genref{Corollary}{corollary}{#1}}
\newcommand{\pdffigure}[4]{\begin{center}\begin{figure}[hbt]\includegraphics[angle=#4,scale=#3]{#1.pdf}\caption[\hspace{\normalparindent}#2]{#2}\figlab{#1}\end{figure}\end{center}}
\makeatletter\@ifclassloaded{beamer}{\input{beamer}}{}\makeatother
\begin{document}
%
%
\title[Isotropic Correlation Models]{Isotropic Correlation Models\\ for the Cross-Section of Equity Returns}
\author{Graham L. Giller}
\email{graham@gillerinvestments.com}
\date{\today}
\begin{abstract}
This note discusses some of the aspects of a model for the covariance of equity returns based on a simple ``isotropic'' structure in which all pairwise correlations are taken to be the same value. The effect of the structure on feasible values for the common correlation of returns and on the ``effective degrees of freedom'' within the equity cross-section are discussed, as well as the impact of this constraint on the asymptotic Normality of portfolio returns. An eigendecomposition of the covariance matrix is presented and used to partition variance into that from a common ``market'' factor and ``non-diversifiable'' idiosyncratic risk. A empirical analysis of the recent history of the returns of S\&P~500 Index members is presented and compared to the expectations from both this model and linear factor models. This analysis supports the isotropic covariance model and does not seem to provide evidence in support of linear factor models. Analysis of portfolio selection under isotropic correlation is presented using mean-variance optimization for both heteroskedastic and homoskedastic cases. Portfolio selection for negative exponential utility maximizers is also discussed for the general case of distributions of returns with elliptical symmetry. The fact that idiosyncratic risk may not be removed by diversification in a model that the data supports undermines the basic premises of structures such as the C.A.P.M.\ and A.P.T. If the cross-section of equity returns is more accurately described by this structure then an inevitable consequence is that picking stocks is not a ``pointless'' activity, as the returns to residual risk would be non-zero.
\end{abstract}
\maketitle
\section{Isotropic Returns}\seclab{toymodel}
It is obvious to even the most casual observers of equity markets that most assets go up and down together but, not exactly, in the same way on every day. This is exhibited in the dispersion in the sample means of returns and the dispersion in the sample correlations of asset pairs. An enormous amount of research has been executed to address the linked problems of:
\begin{enumerate}[label=(\roman*)]
    \item how to accurately measure the actual covariance of returns; and,
    \item the impact of ``dimensional reduction'' in models for the cross-section of returns and how that effects equilibrium returns if all investors are fully informed about the nature of the equity cross-section.
\end{enumerate}
This issues were first raised by Markowitz\cite{harry1952markowitz}, of course, and addressed in the subsequent works by Sharpe\cite{sharpe1964capm} \textit{et al.} and Ross\cite{ross2013arbitrage}. They are now standard items in the Finance curriculum\footnote{See, for example, Bodie \& Merton \textit{Finance}\cite[pt. IV]{bodie2000finance} and others.} and well known to readers.
\subsection{Definition}
Due to the difficulties associated with intuitively understanding the impact of complex multi-factor models for the equity cross-section, it is appealing to introduce a simple \textit{isotropic} structure in which all pair-wise correlations are equal, and to exhibit results from that ``toy'' model more as an aide to exposition than a serious proposal to model the cross-section of returns. This model is introduced and discussed by Grinold and Kahn in their popular work \textit{Active Portfolio Management}\cite[p. 48]{grinold2000active}.

For assets $i\in[1,N]$ we define the covariance of returns\footnote{The notations $\mathbb{E}[x]$, $\mathbb{V}[x]$ and $\mathbb{V}[x,y]$ are used to mean ``the mean of $x$,'' ``the variance of $x$'' and ``the covariance of $x$ and $y$,'' respectively. $\mathbb{V}[\boldsymbol{x}]$ is used to mean ``the covariance matrix of $\boldsymbol{x}$.} as
\begin{align}\eqnlab{Vdef}
\mathbb{V}[r_{it},r_{jt}]&=\sigma_{i}\sigma_{j}\rho \Leftrightarrow 
\mathbb{V}[\boldsymbol{r}_t]=S_NG_NS_N\ \\
\mathrm{where}\ G_N&=\begin{pmatrix}
1&\rho&\cdots&\rho\\
\rho&1&\cdots&\rho\\
\vdots&&\ddots&\vdots\\
\rho&\rho&\cdots&1
\end{pmatrix}\\ \mathrm{and}
\ S_N&=\begin{pmatrix}
\sigma_1&0&\cdots&0\\
0&\sigma_2&\cdots&0\\
\vdots&&\ddots&\vdots\\
0&0&\cdots&\sigma_N
\end{pmatrix}.
\end{align}

This model is defined for a single time, $t$, and so does not make \textit{any} statements about the stationarity, or lack thereof, of either the asset volatilities, $\sigma_i$, or the correlation parameter $\rho$. For completeness, these parameters may all be read with an implicit time index, i.e.\ $\sigma_{it}$ and $\rho_t$ etc., without any affect on the arguments to follow. In particular, this facilitates the sort of multivariate \textsc{GARCH} models with dynamic correlation coefficients proposed by Engle and Sheppard\cite{engle2001theoretical}. All of the analysis presented here may be considered to represent the returns for some single period $(s,t]$ during which a portfolio is formed ``immediately before'' time $s$ and held, constant, through to the time $t$. The return $r_{it}$ represents the return of asset $i$ over the reference period and the variance, $\sigma_{it|s}^2$, represents the expected variance of those returns given the information set available immediately prior to time $s$.
\subsection{The Variance of Equal-Weighted Portfolio Returns}
Consider a portfolio formed from assets with some general covariance matrix $\Sigma$. If $\boldsymbol{h}_t$ represents the portfolio holdings, by asset, formed ``immediately before'' time $s$ and held to time $t$ then the variance of the returns of that entire portfolio is
\begin{equation}
    V_P=\boldsymbol{h}_t^T\Sigma\boldsymbol{h}_t.
\end{equation}
For an equal-weighted portfolio, with $\boldsymbol{h}_t=\boldsymbol{1}_N/N$ where $\boldsymbol{1}_N$ represents a vector of dimension $N$ where all elements are ones, then it is well known that
\begin{equation}
    V_P=\frac{1}{N^2}\sum_{ij=1}^N\sigma_{ij}\ \mathrm{where}\ \sigma_{ij}=\sigma_i\sigma_{j}\rho_{ij}.
\end{equation}
The $\{\sigma_{ij}\}$ are the elements of the matrix $\Sigma$. It is sometimes convenient to express this quantity in terms of what is called the ``grand sum'' of the matrix, $\Sigma$, which I will write as ``$\mathop{\mathrm{gs}}\Sigma$,'' meaning the sum over all of the elements of the matrix. With this notation, $V_P=(\mathop{\mathrm{gs}}\Sigma)/N^2$ which is equal to the arithmetic mean of all of the elements of $\Sigma$.

This expression may be further decomposed into the sum down the diagonal and twice\footnote{Since $\Sigma$ is a symmetric matrix \textit{by definition}.} the sum of the upper-triangle of $\Sigma$:
\begin{equation}\eqnlab{portvar}
    V_P=\frac{1}{N^2}\sum_{i=1}^N\sigma_i^2+\frac{2}{N^2}\sum_{i=1}^N\sum_{j=i+1}^N\sigma_i\sigma_j\rho_{ij}=V_I+V_C,
\end{equation}
where $V_I$ is the total variance due to the \textit{independent} returns and $V_C$ is the total variance due to their covariance. In matrix notation
\begin{equation}
    V_I=\frac{\mathop{\mathrm{tr}}\Sigma}{N^2}\;\mathrm{and}\;V_C=\frac{\mathop{\mathrm{gs}}\Sigma-\mathop{\mathrm{tr}}\Sigma}{N^2},
\end{equation}
where $\mathop{\mathrm{tr}}\Sigma$ is the usual notation for the trace of the matrix.
\subsection{Independent Returns}
In the special case of independent returns, where $\rho_{ij}=0\ \forall\ i\ne j$, this becomes
\begin{equation}\eqnlab{indepr}
    V_P=V_I=\frac{\overline{\sigma^2}}{N}\ \mathrm{where}\ \overline{\sigma^2}=\frac{1}{N}\sum_{i=1}^N\sigma_i^2
\end{equation}
is the mean variance of the assets in the portfolio. Specializing further to homoskedastic assets,\footnote{In this article the terms ``homoskedastic'' and ``heteroskedastic'' should be interpreted as referring strictly to the cross-section of returns. The analysis presented permits variance to vary through time.} i.e. $\sigma_i=\sigma\ \forall\ i$, we have
\begin{equation}\eqnlab{varmeanr}
    V_P=\frac{\sigma^2}{N}=\mathbb{V}[\overline{r_t}].
\end{equation}
which is well known to be the expression for the error-in-the-mean of $N$ independent random variables and is familiar from the \textit{Law of Large Numbers} as discussed in the context of sampling theory in all elementary texts on Statistics.\footnote{See, for example, Kendall\cite[pp. 308--310]{kendall1999vol1}.} 

Within that context, in the expression for the variance of the sample mean of a statistic $x$,
\begin{equation}\eqnlab{samplevar}
    \mathbb{V}[\overline{x}]=\frac{\mathbb{V}[x]}{N},
\end{equation}
the term $N$ would be referred to as the \textit{degrees of freedom} within the statistic, and represents the ``amount of randomness'' embedded within the quantity under study.
\subsection{The \textsl{Effective} Degrees of Freedom in a Portfolio}
Returning to the more general expression of \eqnref{portvar}, we may write
\begin{equation}
    V_P=\frac{\overline{\sigma^2}}{N}\left(1+\frac{V_C}{V_I}\right).
\end{equation}
Since $V_P$ is strictly non-negative we must have $1+V_C/V_I\ge0$, or $V_C\ge-V_I$, with the lower limit representing the case where all variance is removed due to perfect hedging. Thus this term takes the role of a scale factor correcting the actual sample size, $N$, to an ``effective'' sample size, which is strictly non-negative.

That is, the portfolio variance can be though of as given by
\begin{equation}
    V_P=\frac{\overline{\sigma^2}}{N^*}
\end{equation}
with 
\begin{equation}\eqnlab{Nstar}
    N^*=\frac{N}{1+{V_C}/{V_I}}=N\frac{V_I}{V_P}
\end{equation}
representing the \textit{effective} degrees of freedom within the portfolio and $1/(1+V_C/V_I)$ a degrees of freedom ``correction'' due to the existence of covariance between the asset returns.
\subsection{Portfolio Variance with Homoskedastic Isotropic Returns}
For simplicity of exposition, consider the case of homoskedastic isotropic returns. With this choice
\begin{equation}\eqnlab{isovar}
V_P= \sigma^2\frac{1+(N-1)\rho}{N}\ \Rightarrow\ N^*=\frac{N}{1+(N-1)\rho}.
\end{equation}
Note that positive correlation reduces the effective degrees of freedom and negative correlation increases it,

However, \eqnref{isovar} has two interesting limits. Firstly, the requirement that $V_P\ge0$ requires that
\begin{equation}
    1+(N-1)\rho\ge0\ \Rightarrow\ \rho\ge-\frac{1}{N-1}.
\end{equation}
This means that perfect anti-correlation, or the case when all asset returns are completely anti-correlated with all asset returns with $\rho=-1$, is only possible for a universe of two assets. For three assets $\rho\ge-1/2$, for four $\rho\ge-1/3$ etc. In the ``large portfolio'' limit 
\begin{equation}
\lim_{N\rightarrow\infty}(\min\rho)=0,
\end{equation}
meaning that negative correlation cannot exist in arbitrarily large portfolios of assets with isotropic returns.

Secondly, for very large portfolios, the effective degrees of freedom is limited to the reciprocal of the correlation coefficient:
\begin{equation}\eqnlab{doflim}
    \lim_{N\rightarrow\infty}N^*=\frac{1}{\rho}.
\end{equation}
Thus, in such a ``large'' portfolio \eqnref{rhovals} applies:
\begin{equation}\eqnlab{rhovals}
    0\le N^*\le\frac{1}{\rho}.
\end{equation}
\subsection{The Asymptotic Normality of Large Portfolio Returns}
A result generally held to be true for equity portfolios is that ``arbitrarily large'' portfolios have Normally distributed returns. This is viewed the consequence of the \textit{Central Limit Theorem}\cite[pp. 310--312]{kendall1999vol1} directly applied to a sum of random variables that represents the return of a portfolio decomposed into individual asset returns. However, empirical work shows that many equity indices, despite the fact that they are literally composed as a weighted sum of asset values, do not exhibit Normally distributed returns.\footnote{See the author's prior works for an extensive discussion\cite{giller2022adventures,giller2022normal,giller2023countries}.}

If asset returns are well described by an isotropic correlation matrix then \eqnref{doflim} actually has dire consequences for this assumed Normal convergence of portfolio returns. Typical values of pairwise correlations are of order 20\% (see \secref{empirical}) which implies that the upper limit on $N^*$ would be around five! It is common Statistical practice to assume that the independent sample size sufficient for the error made by assuming the sampling distribution of statistic to be ``well approximated'' by the Normal distribution is around thirty,\footnote{This value is of widespread use within the educational and practical Statistics communities.} or that convergence in distribution is, in fact, quite rapid. If the empirical distribution of asset returns is isotropic in the way described here, then convergence to Normality in distribution for any reasonably popular stock market index cannot be achieved for these portfolios cannot contain more that around five effective degrees of freedom no matter their actual sizes, which are typically in the hundreds (\textsc{NASDAQ-100}, \textsc{S\&P~500}, \textsc{FTSE}, \textsc{Nikkei 225} etc.) or thousands (\textsc{S\&P Composite 1500}, \textsc{Russel 3000}, \textsc{Wilshire 5000} etc.). Even a portfolio of every listed equity on the planet could not exhibit Normally distributed portfolio returns under an isotropic covariance matrix if correlations values match those typically measured.

Taking the value $N^*\approx30$ as the target, \eqnref{doflim} implies that the maximum pairwise correlation must be around 3\%, or smaller, for convergence in distribution to occur. This value is plainly at variance with the stylized facts around the correlation of asset price returns.
\subsection{Eigendecomposition of Isotropic Correlation Matrices}\seclab{eigen}
A common starting point for \textit{Factor Analysis}, or the representation of multivariate random vectors in terms of the linear superposition of i.i.d.\ risk ``factors,'' is often \textit{Principal Components Analysis}\cite[ch. 8]{mardia1979}. This is a well known process in which a covariance matrix, $\Sigma$ is decomposed into $Q^TDQ$ where $Q$ is an orthogonal matrix constructed from the eigenvectors that solve $\Sigma\boldsymbol{x}=\lambda\boldsymbol{x}$, for scalar $\lambda$, and $D$ is a diagonal matrix formed from the vector of the $N$ eigenvalues, i.e. $D=\mathop{\mathrm{diag}}\boldsymbol{\lambda}$\cite[ch. 4]{arfken1999mathematical}.\footnote{Some authors use $QDQ^T$.} This procedure, which replaces the $N$ original random variables with $N$ linear combinations of them that are statistically independent, i.e. the vector $Q\boldsymbol{x}$ which has diagonal covariance matrix $D$, can \textit{always} be executed because the matrix $\Sigma$ is symmetric positive definite by definition, and so does not generate any new information, it merely partitions the variance of $\boldsymbol{x}$ into a useful structure by executing a coordinate rotation in $\mathbb{R}^N$.

It is interesting to seek such an decomposition of the the homoskedastic isotropic covariance matrix, $\sigma^2G_N$, as this will give insight into the forms of factor models that the structure supports. This is equivalent to the decomposition of $G_N$ itself, as the eigenvalue problem is scale invariant.\footnote{Specifically, if $\boldsymbol{x}$ is a solution to $A\boldsymbol{x}=\lambda\boldsymbol{x}$ then $\boldsymbol{x}$ is also a solution to $B\boldsymbol{x}=\lambda\boldsymbol{x}$, for $B=kA$ with non-zero scalar $k$.} It can be shown by induction\footnote{See appendix \ref{section:aeigen}.} that matrix $G_N$ has $N-1$ eigenvalues equal to $1-\rho$ and one eigenvalue equal to $1+(N-1)\rho$. Because this system has $N-1$ \textit{degenerate} eigenvalues, the associated eigenmatrix is not necessarily orthogonal but it may be made orthogonal by the Gram-Schmidt procedure\cite[pp. 516--223]{arfken1999mathematical}.

For $N=1$ the eigenmatrix is trivially the identity matrix of dimension $1$ and the associated eigenvalue is also 1. For $N>1$, an orthogonal matrix, $Q_N$, of the form
\begin{equation}
    \begin{pmatrix}
    1/\sqrt{N}&0&0&\cdots&0\\
    0&1/\sqrt{2}&0&\cdots&0\\
    0&0&1/\sqrt{6}&\cdots&0\\
    \vdots&\vdots&\vdots&\ddots&\vdots\\
    0&0&0&\cdots&1/\sqrt{N(N-1)}
    \end{pmatrix}
    \begin{pmatrix}\eqnlab{Qn}
        \phantom{-}1&\phantom{-}1&\phantom{-}1&\cdots&1\\
        -1&\phantom{-}1&\phantom{-}0&\cdots&0\\
        -1&-1&\phantom{-}2&\cdots&0\\
        \phantom{-}\vdots&\phantom{-}\vdots&\phantom{-}\vdots&\ddots&\vdots\\
        -1&-1&-1&\cdots&N-1
    \end{pmatrix}
\end{equation}
may be constructed.\footnote{I have factored out the normalization to expose the structure of the underlying matrix.}
\subsection{The Factor Model Associated with Homoskedastic Isotropic Covariance}
\seclab{isofac}
From a finance perspective, the first row of $Q_N$, when multiplied into a returns vector, will generate the returns proportional to those of an equal-weighted portfolio (a ``market'' factor) and the other rows represents the returns of every possible ``spread-trade'' that features a long position in the final asset included in the portfolio and short positions in all of the others included. Note that this is not \textit{all possible} spread trades, but a specific subset of them which always involves short positions in assets $[1,M-1]$ and a long position in asset $M$, for asset index $M\in[2,N]$. Since the index of this long asset is entirely arbitrary,\footnote{The structure of the problem would not change were we to arbitrarily re-label all of the assets with any permutation of the possible labels.} it is unreasonable to infer that these spread-trade portfolios represent the returns of meaningful risk factors, which is unlike that of the market factor which treats all assets equally. Heuristically, this can then be used to partition the risk represented by the covariance matrix $G_N$ into that arising from one common risk factor and that the rest of the risk is ``idiosyncratic.'' i.e. To make the choice that the number of meaningful Principal Components within the cross-section of returns is just one.\footnote{Which is exactly the choice that would be made by the traditional algorithm of the examination of an ``elbow'' plot of the eigenvalue spectrum, as first suggested by Thorndike\cite{thorndike1953family}.}

Consider an equal-weighted portfolio with weights $1/N$ composed from assets with a homoskedastic isotropic covariance matrix. The total systematic risk associated with that portfolio is clearly given by the variance due to the single common factor, which is
\begin{equation}
    V_S=\sigma^2\frac{1+(N-1)\rho}{N^2}.
\end{equation}
The total residual idiosyncratic risk is given by the sum of the remaining eigenvalues, or $(N-1)(1-\rho)$, with the same constant of proportionality. i.e.
\begin{equation}
    V_R=\sigma^2\frac{(N-1)(1-\rho)}{N^2}.
\end{equation}
\pdffigure{risk}{The limiting behaviour of the total residual risk and total systematic risk of an equal-weighted portfolio when assets are described by an isotropic covariance matrix.}{.6,bb=0 0 7.96in 4.47in}{0}

The limiting behaviour of these measures is shown in \figref{risk}.
In the large portfolio limit for this system the residual risk \textit{does not vanish} except in the case of perfect correlation, $\rho=1$, in which case there is only common factor risk. 
\begin{equation}\eqnlab{resr2}
    \lim_{N\rightarrow\infty}\frac{V_R}{V_S}=\frac{1-\rho}{\rho}.
\end{equation}
This is a number of order unity for typical values of $\rho$.\footnote{For example, it is $3$ when $\rho=25\%$, $1$ when  $\rho=50\%$, and $1/3$ when $\rho=75\%$.} Thus, this single factor model is not equal to the Capital Asset Pricing Model of Sharpe \textit{et al.} because it is not possible to eliminate residual risk through diversification and, therefore, there must remain a premium to be paid to investors who take residual risk within their portfolios.
\subsection{The Effective Degrees of Freedom in Standard Factor Models}\seclab{factor}
Principal Components Analysis, as discussed \vpageref[above]{section:eigen}, becomes Factor Analysis when a scheme is introduced to exclude some of the $N$ components, leaving a vector, $\boldsymbol{f}_t$, in $\mathbb{R}^K$ of independent risk factors and the remaining variance explained by a random variable $\boldsymbol{\varepsilon}_t$ in $\mathbb{R}^N$ for which all members are independent of each other \textit{and} of $\boldsymbol{f}_t$ as well.

This can be expressed via the familiar linear additive noise model
\begin{align}
    \boldsymbol{r}_t&=\boldsymbol{\mu}+B\boldsymbol{f}_t+\boldsymbol{\varepsilon}_t\\
    \mathrm{where}\ \mathbb{E}[\boldsymbol{f}_t]&=\boldsymbol{0}_K\ \mathrm{and}\ 
    \mathbb{E}[\boldsymbol{\varepsilon}_t]=\boldsymbol{0}_N\ \Rightarrow\ \mathbb{E}[\boldsymbol{r}_t]=\boldsymbol{\mu}\\
    \mathrm{and}\ \mathbb{V}[\boldsymbol{r}_t]&=B\mathbb{V}[\boldsymbol{f}]B^T+\mathbb{V}[\boldsymbol{\varepsilon}]\ \mathrm{with}\ \mathbb{V}[\boldsymbol{f},\boldsymbol{\varepsilon}]=0.
\end{align}
It also is common in this decomposition to set the factor variances themselves to unity, which merely transfers the scale factor due to the particular values along the diagonal of $\mathbb{V}[\boldsymbol{f}_t]$ into the $N\times K$ factor loading matrix $B$ and, otherwise, is a trivial change to the model. This means that $\mathbb{V}[\boldsymbol{f}]=I_K$, where $I_K$ is the identity matrix of dimension $K$. Let $\mathbb{V}[\boldsymbol{\varepsilon}_t]=\mathcal{S}^2$, where $\mathcal{S}^2$ is a diagonal matrix of dimension $N$ with the idiosyncratic variance of each stock along the diagonal. Thus
\begin{equation}
    \mathbb{V}[\boldsymbol{r}_t]=BB^T+\mathcal{S}^2.
\end{equation}

As \vpageref[above]{section:isofac}, consider an equal-weighted portfolio. The portfolio variance is
\begin{equation}\eqnlab{fpvar}
    V_P=\frac{\boldsymbol{1}_N^TBB^T\boldsymbol{1}_N+\mathop{\mathrm{tr}}\mathcal{S}^2}{N^2}=\frac{\mathop{\mathrm{gs}}BB^T+\mathop{\mathrm{tr}}\mathcal{S}^2}{N^2}.
\end{equation}

The factor loadings matrix, $B$, may be written in terms of a set of $N$ factor loadings vectors, $\{\boldsymbol{b}_i\}$, of dimension $K$. i.e.
\begin{equation}
    B=\begin{pmatrix}
        \boldsymbol{b}_1^T\\
        \vdots\\
        \boldsymbol{b}_N^T
    \end{pmatrix}
    \;\Leftrightarrow\;
    B^T=\begin{pmatrix}
        \boldsymbol{b}_1&\dots&\boldsymbol{b}_N
    \end{pmatrix}.
\end{equation}
Thus the outer product $BB^T$ can be seen to be the $N\!\times\!N$ matrix of all inner products of the $K$ dimensional loadings vectors:
\begin{align}
    BB^T&=\begin{pmatrix}
        \boldsymbol{b}_1^T\boldsymbol{b}_1&
            \boldsymbol{b}_1^T\boldsymbol{b}_2&
            \dots&
            \boldsymbol{b}_1^T\boldsymbol{b}_N\\
        \boldsymbol{b}_2^T\boldsymbol{b}_1&
            \boldsymbol{b}_2^T\boldsymbol{b}_2&
            \dots&
            \boldsymbol{b}_2^T\boldsymbol{b}_N\\
        \vdots&\vdots&\ddots&\vdots\\
        \boldsymbol{b}_N^T\boldsymbol{b}_1&
            \boldsymbol{b}_N^T\boldsymbol{b}_2&
            \dots&
            \boldsymbol{b}_N^T\boldsymbol{b}_N
    \end{pmatrix}\eqnlab{BBT}\\
\Rightarrow\;\mathop{\mathrm{gs}}BB^T&=
(\boldsymbol{b}_1^T+\dots+\boldsymbol{b}_N^T)\times
(\boldsymbol{b}_1+\dots+\boldsymbol{b}_N).\eqnlab{gsBBT}
\end{align}
Let $\overline{\boldsymbol{b}}$ represent the arithmetic mean of the loadings vectors. \Eqnref{gsBBT} may then be written
\begin{equation}
    \mathop{\mathrm{gs}}BB^T=N^2\,\overline{\boldsymbol{b}}^T\overline{\boldsymbol{b}}.
\end{equation}
Furthermore, let $\{s_i\}$ be the diagonal elements of $\mathcal{S}$ and write $\overline{s^2}$ for the mean idiosyncratic variance. \Eqnref{fpvar} may then be written
\begin{equation}
V_P=\overline{\boldsymbol{b}}^T\overline{\boldsymbol{b}}+\frac{\overline{s^2}}{N},
\end{equation}
where we identify the two terms as the total systematic variance, $V_S$, and the total residual variance, $V_R$. $V_S$ is a scalar independent of portfolio size so, in the large portfolio limit,
\begin{equation}
\lim_{N\rightarrow\infty}V_P=V_S\;\Rightarrow\;\lim_{N\rightarrow\infty}\frac{V_R}{V_S}=0.
\end{equation}
The contribution from idiosyncratic variance to portfolio variance is diversified away. This is, of course, a foundational result in both the \textit{Capital Asset Pricing Model}\cite{sharpe1964capm} and \textit{Arbitrage Pricing Theory}\cite{ross2013arbitrage}. It is fundamentally different from the result of \eqnref{resr2} and indicates that it should be possible to distinguish empirically between these two theories on the cross-section of equity returns.

For this model, the total independent variance in an equal-weighted portfolio of assets with with $K$ factors, $V_I$ as defined in \eqnref{portvar}, is the trace of the covariance matrix divided by the square of the portfolio size. i.e.
\begin{equation}
    V_I=\frac{\mathop{\mathrm{tr}}(BB^T+\mathcal{S}^2)}{N^2}=\frac{\mathop{\mathrm{tr}}BB^T}{N^2}+\frac{\overline{s^2}}{N}.
\end{equation}
From the properties of the trace,\footnote{This may be seen directly by inspection of \eqnref{BBT}.} $\mathop{\mathrm{tr}}BB^T$ is equal to the sum of the squares of all of the elements of matrix $B$, or $\sum_{ij}b_{ij}^2$ for matrix elements $\{b_{ij}\}.$ Therefore
\begin{equation}
    V_I=\overline{b^2}+\frac{\overline{s^2}}{N},
\end{equation}
with $\overline{b^2}=(\mathop{\mathrm{tr}}BB^T)/N^2$.  The effective degrees of freedom (\eqnref{Nstar}) is then
\begin{equation}\eqnlab{Nstarfactor}
    N^*=\frac{\overline{b^2}N+\overline{s^2}}{\overline{\boldsymbol{b}}^T\overline{\boldsymbol{b}}+\overline{s^2}/N}
    =N\frac{\overline{b^2}N+\overline{s^2}}{\overline{\boldsymbol{b}}^T\overline{\boldsymbol{b}}N+\overline{s^2}}.
\end{equation}
\subsubsection{Important Limits of $N^*(N)$ for $K$ Linear Factors}\seclab{faclimit}
\Eqnref{Nstarfactor} has four important limiting cases.
Firstly, for arbitrarily large portfolios with less factors than assets (i.e. $K<N$) the effective number of degrees of freedom increases without bound, however it tends towards a value independent of the residual risk:
\begin{equation}
    N^*\xrightarrow[N\to\infty]{}\frac{\overline{b^2}}{\overline{\boldsymbol{b}}^T\overline{\boldsymbol{b}}}N.
\end{equation}
Secondly, for completely independent assets (i.e. $K=0$) it is merely equal to the total number of assets in the portfolio:
\begin{equation}
\lim_{B\rightarrow0}N^*=N,
\end{equation}
which is the same result as that in \eqnref{isovar} for $\rho\rightarrow0$.
Thirdly, for the case of a full Principal Components Analysis where $K=N$, we have $B=D^{1/2}Q$ since $\Sigma=QDQ^T$ so $BB^T=D$ meaning that $\mathop{\mathrm{gs}}BB^T=\mathop{\mathrm{tr}}BB^T$ and $\overline{s^2}=0$ in the above expressions. Therefore
\begin{equation}
    \lim_{K\rightarrow N}N^*=N.
\end{equation}
Finally consider the case when there are $K$ factors but none of the factors are dominant so that the loadings of all stocks onto any factor are similar. That is $b_{ij}\approx b$, for some constant $b$. This also covers a C.A.P.M.\ type structure ($K=1$) in which all stocks have similar ``betas.'' For this case
\begin{equation}\eqnlab{largeNKfactors}
    N^*\approx\frac{b^2N+\overline{s^2}}{b^2K+\overline{s^2}/N}\;
    \Rightarrow\;N^*\xrightarrow[N\to\infty]{}\frac{N}{K}.
\end{equation}
\subsubsection{The Relationship between Factor Loading Matrix Means}\seclab{facmean}
\Eqnref{Nstarfactor} may be thought of as an expression for $N^*$ in terms of four parameters: $N$, $\overline{s^2}$, $\overline{b^2}$ and $\overline{\boldsymbol{b}}^T\overline{\boldsymbol{b}}$. Since the latter two are \textit{both} themselves functions of the entire factor loading matrix, $B$, clearly their values are not independent: one is a \textit{mean-squared} and the other a \textit{squared-mean} of the matrix elements $b_{ij}$. One should expect a ``triangular'' relationship between them, which is as follows.\footnote{This derivation follows the form of well known proofs that the population variance is non-negative.}

Consider the real valued matrix $B$ of dimension $N\times K$, with elements $b_{ij}$, and, as above, let the column means be
\begin{equation}
    \overline{b}_j=\frac{1}{N}\sum_{i=1}^Nb_{ij}.
\end{equation}
From the definition of $B$ it follows that
\begin{equation}
    \left(b_{ij}-\overline{b}_j\right)^2\ge0.
\end{equation}
Therefore the mean over all matrix elements of this quantity must also be non-negative.
\begin{equation}
    \frac{1}{KN}\sum_{j=1}^K\sum_{i=1}^N\left(b_{ij}-\overline{b}_j\right)^2\ge0.
\end{equation}
Expanding the square and summing over the assets ($i$) and factors ($j$) then gives:
\begin{equation}
    \overline{b^2}\ge\frac{1}{K}\overline{\boldsymbol{b}}^T\overline{\boldsymbol{b}}.
\end{equation}
That is the mean of the squared matrix elements is not smaller than the mean of the squared column means. 
\subsection{Summary of Results}
From the above we see that an homoskedastic isotropic covariance matrix may be though of as generating a single factor model, in that there is clearly a ``market'' factor that may be composed from asset returns and there is a factor-replicating portfolio (the equal-weighted portfolio) associated with it, but this is not a ``true factor model,'' as is usually defined, since the residual returns are not independent and residual risk cannot be fully diversified away.

The functional form of $N^*(N)$ presents a signature of the nature of the covariance of equity returns that is independent of distributional choice and has a specific form that differs between isotropic covariance and linear factor models, thus it may be used to discriminate between these two hypothesis in real data. For a isotropic covariance models $N^*(N)$ is asymptotically a constant given by $1/\rho$ whereas for linear factor models $N^*(N)$ is asymptotically proportional to $N$ and divergent.
\section{Empirical Measurements of the Degrees of Freedom within Major Market Indices}\seclab{empirical}
Results are presented for the adjusted daily returns of members of the S\&P~500 Index since the last index re-balance. Restricting the analysis to this period removes survivorship bias but it does, potentially, introduce temporal bias, since the data is restricted to a short, recent, history of asset price returns. The data used is available publicly and full details of the code and data sources are given in Appendix~\ref{section:data}. At the time of writing (October, 2024) the last index re-balance was on September 30th., 2024, and there were 503 stocks in the index.
\subsection{Exploratory Data Analysis}
To gain insights into the likely values of pairwise correlations between assets
a simple random sampling experiment was executed. Pairs of assets were selected at random from the available universe of index members and the correlation of their daily adjusted returns computed. For an index of size $N_\mathrm{max}$ there are $N_\mathrm{max}!/\{2!(N_\mathrm{max}-2)!\}$ ways of picking such index pairs. This gives a total of 126,253 possible index pairs to examine, however just 5,000 random trials were selected. The data measured are exhibited in the histogram in \figref{rho}. This shows a mean Pearson correlation coefficient of $+17\%$ and broad dispersion of data about that value, with the range of values spanning from $-80\%$ to close to $+100\%$. The sample is visibly left-skewed.

As the homoskedastic isotropic correlation model hypothesizes that these correlation coefficients are are the same value, and that the dispersion seen arises \textit{purely} from sampling variation, it is useful to understand what the scale of that dispersion would be under the hypothesis of identical correlation between all pairs. In his work on the sampling distribution of the correlation coefficient, Fisher\cite{fisher1915frequency} introduced simple ``Normalizing'' transformation given by the inverse hyperbolic tangent, $\mathop{\mathrm{atanh}}\rho$. It was also shown that this transformed coefficient was rapidly asymptotically Normal with variance $1/(N_\mathrm{obs}-3)$, for sample size $N_\mathrm{obs}$.
\pdffigure{rho}{The distribution of the correlation of adjusted daily returns for a random sample of index member pairs.}{0.45,bb=0 0 8.5in 11in}{270}
\pdffigure{fisher}{The distribution of the $Z$ score for the Fisher transformed  correlation of adjusted daily returns for a random sample of index member pairs.}{0.45,bb=0 0 8.5in 11in}{270}

A better context on whether the data shown in \figref{rho} is consistent with being generated from a common correlation coefficient may therefore be gained from \figref{fisher}, where the same data is plotted but transformed through $Z=\sqrt{N_\mathrm{obs}-3}\mathop{\mathrm{atanh}}\rho$. Also shown is a Normal distribution curve with a mean given by $\mathop{\mathrm{atanh}}\overline{\rho}\approx0.22$ and a standard deviation of one. To the eye this data does seem very consistent with the unit Normal proposed, although it does fail a Kolmogorov-Smirnov test\cite[pp. 269--271]{kstest} with 5\% confidence having a two-sided test statistic of $D=0.021$ and a $p$ value of $0.018$. However, given the nature of financial data, it doesn't seem that much weight should be placed on such a weak result.
\subsection{Experimental Design}\seclab{method}
To evaluate the empirically observed relationship between effective degrees of freedom and portfolio size is quite straightforward. 

To evaluate the functional relationship $N^*(N)$ for the selected index it is necessary to explore a range of values for $N$ and compute $N^*$ as given in \eqnref{Nstar} for each $N$ considered. To prevent biases due to analysts choices a random sampling scheme was executed as follows:

\begin{enumerate}[label=\arabic*:]
    \item select $N$ at random (and uniformly) from $[1,N_\mathrm{max}]$, where $N_\mathrm{rmax}$ is the number of securities in the index;
    \item select $N$ securities at random from the list of index members given;
    \item for this set of securities compute the individual variances of daily returns, $\{\sigma^2_i\}$, and for an equal-weighted portfolio;
    \item from these data $V_I$ and $V_P$ may be evaluated, and ultimately $N^*(N)$.    
\end{enumerate}
This whole procedure was then repeated a 1,000 times. As there are $2^{N_\mathrm{max}}-1$ ways of picking between one and $N_\mathrm{max}$ securities from an index of this size, so it is computational infeasible to explore every portfolio in the population. and so an experiment based on random sub-sampling must be pursued.\footnote{At the time of writing (October, 2024) there are $503$ assets in the S\&P~500 index and $2^{503}-1\approx2.6\times10^{151}$.} 
\subsection{Results of Experiment}
The results of this experiment are shown in \figref{nstarn}. Although there is substantial sampling error for medium sized portfolios, the data appears to be consistent with the functional form expected for an homoskedastic isotropic covariance matrix. For this data set, $\hat{\rho}$ under this assumption, may be estimated as
\begin{equation}
    \hat{\rho}=\frac{N_\mathrm{max}-N^*}{(N_\mathrm{max}-1)N^*}.
\end{equation}
\pdffigure{nstarn}{A scatter plot of $N^*$ vs.\ $N$ for 1,000 portfolios formed according to the method described in \secref{method}. The orange line represents the curve expected for an homoskedastic isotropic covariance matrix, with the correlation chosen to be consistent with the value $N^*(N_\mathrm{max})$, and the green line represents the curve expected for a linear factor model with the number of factors chosen to be consistent with the same measure on the assumption of a homogeneous factor loading matrix.}{0.45,bb=0 0 8.5in 11in}{270}

For the S\&P~500 experiment, the maximum effective d.o.f., for an equal-weighted portfolio of size $503$ is measured to be $7.44$. This gives $\hat{\rho}=(503-7.12)/(502\times7.12)\approx13.27\%$. The functional form of $N^*(N)$ then follows immediately from this single estimate and is drawn as an orange line in the figure. The curve is not ``fitted'' across the span of the data, it is \textit{extrapolated} from the measured terminal value. The agreement with the other $O(1000)$ points in the sample created by the experiment is quite clear.

However, this is not sufficient to \textit{rule out} the applicability of other models, for they may generate the same kind of curve. In that situation, the best that could be said was that the data is not inconsistent with a heteroskedastic isotropic correlation model. To that end two further curves are drawn on \figref{nstarn}.

The first of these is the curve expected in the ``large portfolio, $K$ linear factors'' limit of \eqnref{largeNKfactors}, or $N^*\simeq N/\hat{K}$. With the data computed this gives $\hat{K}\approx502/7.44=68$ and the curve is drawn with a green line. It is clear that, apart from the final point that it is required to fit, the data in no way resembles this asymptotic form.

A second approach is to approximate \eqnref{Nstarfactor} with a form that assumes that $N$ is sufficiently large that the term in $\overline{s^2}/N$ in the denominator may be neglected, leaving a simple linear relationship
\begin{equation}
    N^*(N)\simeq\frac{\overline{b^2}}{\overline{\boldsymbol{b}}^T\overline{\boldsymbol{b}}}N+\overline{s^2}.
\end{equation}
The two scalars in this expression may be estimated by ordinary least squares in the large N region where the observed relationship between $N^*$ and $N$ does not seem to exhibit much curvature.

A fit over the region $N\in[300,502]$ produces the red line shown in \figref{nstarn}. This line is not visually separable from the orange curve due to the heteroskedastic isotropic model. The fit has an $R^2$ of just $0.1\%$ and an $F$ statistic of $0.28$ for $1$ and $409$ degrees of freedom. This is not a significant regression and the slope coefficient of $0.0001\pm0.0002$ has a $t$ score of $0.528$ meaning that there is no evidence that the null hypothesis of zero slope should be rejected. Nevertheless, this estimated slope implies $\hat{K}\approx8,613$, which is a factor of $17$ more than the maximum number of assets for which a portfolio may be composed. It does not seem reasonable to conclude that this data is suggesting the returns of a portfolio formed from the members of the S\&P~500 index has this many degrees of freedom. Under the common interpretation of the regression results, which is that the slope coefficient is not significantly distant from zero, this requires the conclusion that the data is not following a linear factor model. Interpreting the fitted constant, $7.358\pm0.088$ as an estimator of the common correlation coefficient, $\rho$, according to the limit of \eqnref{doflim}, this gives an estimate $\hat{\rho}=13.6\%$, similar to the other measurements. The $t$ score of the untransformed intercept is $83.2$ and it is most definitely not consistent with zero!
\section{Portfolio Selection with Isotropic Covariance Matrices}
\subsection{Mean-Variance Optimization}
After Markowitz\cite{harry1952markowitz}, the mean-variance optimal portfolio, $\hat{\boldsymbol{h}}$, with $N$ securities  may be written:
\begin{equation}
   \hat{\boldsymbol{h}}=\argmin_{\boldsymbol{h}\in\Omega}
   \left(\boldsymbol{h}^T\boldsymbol{\alpha}-\lambda\boldsymbol{h}^T\Sigma\boldsymbol{h}\right),
\end{equation}
where $\Omega\subseteq\mathbb{R}^N$ represents a space of feasible portfolios, $\boldsymbol{\alpha}$ is a vector of expected returns and $\Sigma$ is the covariance matrix of the assets. In this equation the Lagrange multiplier\cite[pp. 945--950]{arfken1999mathematical}, $\lambda$, can also be thought of as the ``market price'' of risk. It is straightforward to show by calculus that
\begin{equation}
    \hat{\boldsymbol{h}}=\frac{\Sigma^{-1}\boldsymbol{\alpha}}{2\lambda}.
\end{equation}
The specific value of $\lambda$ may be adjusted to ensure the optimal portfolio, $\hat{\boldsymbol{h}}$, lies within the feasible region $\Omega$, or may be set by other criteria if it is not bound by the constraints.\footnote{For example, Thope\cite{thorp2011kelly} shows that $\lambda=1/2$ is asymptotically equivalent to the Kelly Criterion\cite{kelly1956new} under the assumption of Normally distributed returns.}

Assuming the form of \eqnref{Vdef} for $\Sigma$ gives
\begin{equation}
\Sigma^{-1}=S_N^{-1}G_N^{-1}S_N^{-1}\;\Rightarrow\;\hat{\boldsymbol{h}}=\frac{S_N^{-1}G_N^{-1}S_N^{-1}\boldsymbol{\alpha}}{2\lambda}.
\end{equation}
From the decomposition of $H_N$ given in the proof of 
\theref{inv2}, and the associated factorization of $f_N$ as $(\rho-1)\{1+(n-1)\rho\}$, the inverse of $G_N$ may be written
\begin{equation}
    G_N^{-1}=\frac{I_N}{1-\rho}-\frac{\rho\boldsymbol{1}_N\boldsymbol{1}_N^T}{(1-\rho)\{1+(N-1)\rho\}}.
\end{equation}
As $S_N$ is a diagonal matrix the product $S_N^{-1}\boldsymbol{\alpha}$ simply represents the ``$Z$-Scores'' of the expected returns, or $\boldsymbol{z}$. Therefore
\begin{equation}\eqnlab{mvoiso}
    \hat{\boldsymbol{h}}=\frac{S_N^{-1}}{2\lambda(1-\rho)}\left\{\boldsymbol{z}
    -\frac{N\rho\overline{z}\boldsymbol{1}_N}{1+(N-1)\rho}
    \right\}
\end{equation}
where $N\overline{z}=\boldsymbol{1}^T_N\boldsymbol{z}$. In the large portfolio limit
\begin{equation}
    \lim_{N\rightarrow\infty}\hat{\boldsymbol{h}}=\frac{S_N^{-1}(\boldsymbol{z}-\overline{z}\boldsymbol{1}_N)}{2\lambda(1-\rho)},
\end{equation}
and for the homoskedastic case
\begin{equation}
    \lim_{N\rightarrow\infty}\hat{\boldsymbol{h}}=\frac{\boldsymbol{\alpha}-\overline{\alpha}\boldsymbol{1}_N}{2\lambda(1-\rho)\sigma^2}.
\end{equation}
The effect of the correlation is twofold: when $0<\rho<1$ the portfolio is scaled up relative to the case of no correlation; and, more interestingly, when portfolios are ``large'' the optimal strategy is to invest proportional to ``relative alpha,'' $\boldsymbol{\alpha}-\overline{\alpha}\boldsymbol{1}_N$, whereas when they are small the centering term, $\overline{\alpha}\boldsymbol{1}_N$ is suppressed relative to the ``outright alpha'' by a factor $\rho N/\{1+(N-1)\rho\}$. This is shown for various values of $\rho$ in \figref{centering}.
\pdffigure{centering}{The limiting behaviour of the factor scaling the ``mean-alpha'' term in a mean-variance optimal portfolio when assets are described by an isotropic covariance matrix.}{.6,bb=0 0 8.2in 4.47in}{0}
\subsection{Joint Distributions of Returns with Elliptic Symmetry}
In the author's earlier work asset allocation via negative exponential utility is investigated for distributions of returns with elliptical symmetry\cite{giller2004frictionless}. That is multivariate probability distributions distributions that can be written
\begin{equation}\eqnlab{gdef}
    f(g^2)\;\mathrm{where}\;g^2=\frac{(\boldsymbol{r}_t-\boldsymbol{\alpha})^T\Sigma^{-1}(\boldsymbol{r}_t-\boldsymbol{\alpha})}{\xi^2},
\end{equation}
$\Sigma$ is the covariance matrix of returns, and $\xi$ is a distribution dependent scalar.\footnote{The term $\xi$ is not introduced in the expression in the prior work, which is written in terms of a parameter matrix ``$\Sigma$'' that the covariance of returns is proportional to. In \textit{this} work $\Sigma$ represents the covariance matrix itself so $\xi$ is added here due to the notational collision.} For this system, it is shown that a negative exponential utility maximizer should chose a portfolio\footnote{In this expression, the term $\xi^2$ can be absorbed into the definition of $\lambda$ without affecting the solution.}
\begin{equation}
    \hat{\boldsymbol{h}}_t=\frac{\Sigma^{-1}\boldsymbol{\alpha}}{2\lambda\xi^2\Psi_{\!N/2}(\hat{x})}
\end{equation}
where $\Psi_\nu(x)$ is a special function that represents the ratio of two integral transforms of the radial distribution of returns and $\hat{x}$ is a value chosen to satisfy
\begin{equation}\eqnlab{elliptical}
    \hat{x}=x\;:\;x\Psi_{\!N/2}(x)=\sqrt{\boldsymbol{\alpha}^T\Sigma^{-1}\boldsymbol{\alpha}}.
\end{equation}

The portfolio represented by \eqnref{elliptical} can be thought of as equal to that delivered by canonical mean-variance optimization multiplied by a scaling function, $\Omega(Z)$, of the Mananalobis distance of the alpha defined by $Z^2=\boldsymbol{\alpha}^T\Sigma^{-1}\boldsymbol{\alpha}$ for distribution moments $\mathbb{E}[\boldsymbol{r}_t]=\boldsymbol{\alpha}$ and $\mathbb{V}[\boldsymbol{r}_t]=\Sigma$. $\Omega(Z)$ is in general a non-negative monotonic function of it's argument that takes unit value at the origin and is equal to that \textit{everywhere} for a multinormal distribution of returns. For leptokurtotic distributions of returns it is a \textit{decreasing} function of $Z$.

For isotropic correlation the value of $G_N^{-1}$ is known and so $Z$ may be explicitly written as
\begin{align}    Z^2&=\boldsymbol{\alpha}^TS_N^{-1}G_N^{-1}S_N^{-1}\boldsymbol{\alpha}\\
   &=\boldsymbol{z}^TG_N^{-1}\boldsymbol{z}\\
   &=\frac{N}{1-\rho}\left\{\overline{z^2}-\frac{\rho N}{1+(N-1)\rho}\overline{z}^2\right\},\eqnlab{Z2value}
\end{align}
where $\overline{z}^2$ and $\overline{z^2}$ are the squared-mean and mean-squared of the elements of $\boldsymbol{z}$ respectively.\footnote{The term suppressing the mean alpha in \eqnref{mvoiso} is also seen playing the same role in \eqnref{Z2value}.} In the large portfolio limit the term in in braces  converges to the variance of the elements of the specific sample value of $\boldsymbol{z}$ i.e.
\begin{equation}
    Z^2\xrightarrow[N\to\infty]{}\frac{N}{1-\rho}\left(\overline{z^2}-\overline{z}^2\right).
\end{equation}
\subsection{Solution with a Multivariate Laplace Distribution}
Specializing to the case of a multivariate Laplace distribution, as defined in prior work\cite{giller2024laplace}, the functional form of the $\Omega(Z)$ function is shown to be
\begin{equation}
    \Omega(Z)=\frac{\sqrt{1+4 Z^2/(N+1)}-1}{Z^2/(N+1)}
\end{equation}
and so the optimal portfolio for a negative exponential utility maximizer is
\begin{equation}
\hat{\boldsymbol{h}}=\frac{S_N^{-1}}{2\lambda(1-\rho)}\left\{\boldsymbol{z}
    -\frac{N\rho\overline{z}\boldsymbol{1}_N}{1+(N-1)\rho}
    \right\}\frac{\sqrt{1+4 Z^2/(N+1)}-1}{Z^2/(N+1)}.
\end{equation}
Care must be taken when examining the asymptotic behaviour of this expression as the $\Omega(Z)$ term should not be treated as having the na\"ive limit $\Omega\to2$ since $Z^2\propto N$ for large $N$.
\section{Conclusions}
Somewhat surprisingly\footnote{At least to the author, who has spent his professional career working within the A.P.T.\ framework.} this simple statistic, $N^*(N)$, which makes no distributional assumptions and merely relies on carefully studying how variance accumulates as assets are added to a portfolio, gives strong support to the idea that a heteroskedastic isotropic covariance model is a reasonable description of the equity cross-section, at least over recent history.\footnote{Clearly, a defect in this analysis is that it is restricted to only the most recent data. A further project to extend the empirical analysis presented herein to prior periods is ongoing.}

If the canonical models are \textit{insisted upon}, then the data seems to support a number of factors that is more aligned with A.P.T.\ based risk models where $K=O(100)$, as they are used on Wall Street, than the much smaller models, such as the Fama-French three factor model\cite{fama1973risk} or the Carhart model\cite{carhart1997persistence}, where $K=O(5)$. However, the reality is that the data does not seem to support the limiting behaviour, $N^*(N)\simeq N/K$, suggested by \textit{any} linear factor model for ``large'' $N=O(500)$, and these models do not appear to be supported by the data.

A key feature of the heteroskedastic isotropic correlation model, as developed in \secref{isofac}, is that it supports the concept of a ``market factor'' but it does not support the effectively complete elimination of residual risk through portfolio diversification. The ability to entirely remove residual risk is a foundational assumption for linear factor models such as C.A.P.M.\ or A.P.T., and so this structure is not equivalent to a reduced form ``nested within'' canonical risk models. It predicts something entirely different, which is that there is value to be obtained by picking stocks \textit{if} that activity is restricted to a ``large'' portfolio!

Finally, mean-variance optimal portfolios with isotropic correlation of asset returns should tilt away from outright alpha towards a standardized relative alpha as portfolio sizes get larger. This means that not only should the trader be picking stocks but that the extent to which they should be emphasizing relative value trades over tracking the ``market'' factor is a function of portfolio size. This is a somewhat heretical approach in modern times, and is driven by the failure of residual risk to be fully removed by diversification in large portfolios for these covariance structures. If one were to conjecture that the ability to ``manage'' a large portfolio is positively correlated, in some way, with assets under management then this result would suggest that small investors, i.e.\ ``retail investors,'' should concentrate on a whole-market index tracking fund whereas large investors should switch towards individual investment opportunities. Neither strategy is ``right,'' but the choice of index-tracking vs.\ stock picking is definitely linked to the ability to manage a large portfolio.\footnote{Where ``manage'' means accurately determine expected returns and expected variance.} This conclusion is not affected by non-Normal distributions of returns if they possess elliptic symmetry because, in this case, the modification to the optimal portfolio is to introduce a non-linear scale factor that is applied equally to all assets.
\appendix
\section{Solution of the Eigenvalue Problem and Inverse for Isotropic Correlation Matrices of any Size}\seclab{aeigen}
In \secref{eigen} it is asserted that the eigendecomposition of an isotropic correlation matrix has the particular form given there. In the following this is proved by induction.
\begin{definition}
    An isotropic correlation matrix, $G_N$, is a real valued symmetric 
    matrix of dimension $N\ge1$ with all diagonal entries equal to unity and all other entries equal to a constant $\rho\in[1/(1-N),+1]$ for $N>1$. $G_1$ is trivially the $1\!\times\!1$ matrix with element $1$.
\end{definition}
\begin{lemma}\lemlab{toprow}
    $G_N$ has an eigenvalue given by $1+(N-1)\rho$ and the associated eigenvector is proportional to the vector of ones of dimension $N$, or $\boldsymbol{1}_N$.
\end{lemma}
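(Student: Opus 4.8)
The plan is to verify the claim directly, since the assertion identifies a specific candidate eigenvector with a specific eigenvalue---no search is required, only substitution. First I would record the convenient decomposition of the isotropic correlation matrix as a rank-one perturbation of the identity,
\begin{equation}
G_N = (1-\rho)I_N + \rho\,\boldsymbol{1}_N\boldsymbol{1}_N^T,
\end{equation}
which holds because the diagonal entries become $(1-\rho)+\rho=1$ and every off-diagonal entry becomes $\rho$, exactly matching the definition of $G_N$.

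With this form in hand, I would apply $G_N$ to $\boldsymbol{1}_N$ and use $\boldsymbol{1}_N^T\boldsymbol{1}_N = N$, giving $G_N\boldsymbol{1}_N = (1-\rho)\boldsymbol{1}_N + \rho N\,\boldsymbol{1}_N = \{1+(N-1)\rho\}\boldsymbol{1}_N$, which establishes both the eigenvalue and its eigenvector simultaneously. An equivalent route, which avoids even the decomposition, is to observe that the $i$-th component of $G_N\boldsymbol{1}_N$ is simply the $i$-th row sum of $G_N$, namely one diagonal entry of $1$ together with $(N-1)$ copies of $\rho$; since every row sum is identical and equal to $1+(N-1)\rho$, the product $G_N\boldsymbol{1}_N$ is that constant times $\boldsymbol{1}_N$.

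The only remaining bookkeeping is to confirm that $\boldsymbol{1}_N \neq \boldsymbol{0}$, so that it genuinely qualifies as an eigenvector rather than trivially solving the eigen-equation; this is immediate for $N\ge1$. I do not anticipate any real obstacle, as the statement reduces to a one-line computation either way. The point worth flagging for the induction to follow is merely that this $\boldsymbol{1}_N$ direction is the single non-degenerate eigendirection, and that it is orthogonal to the $(N-1)$-dimensional eigenspace associated with the repeated eigenvalue $1-\rho$ that the companion argument must treat separately.
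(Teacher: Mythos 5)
Your proof is correct, and your ``equivalent route'' via row sums is exactly the paper's own argument: $G_N\boldsymbol{1}_N$ has components equal to the row sums of $G_N$, each of which is $1+(N-1)\rho$ by definition. Your primary route via the rank-one decomposition $G_N=(1-\rho)I_N+\rho\,\boldsymbol{1}_N\boldsymbol{1}_N^T$ is an equally valid one-line verification (and, incidentally, is the same decomposition the paper later exploits in the proof of \Theref{inv2}, where the rows of $G_N$ are written as $(1-\rho)\boldsymbol{e}_i+\rho\boldsymbol{1}_N$), so no gap either way.
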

\begin{proof}
    For any square matrix, $A$, of dimension $N$, the product $A\boldsymbol{1}_N$ is a vector with each element equal to the sum of the corresponding row in $A$. This follows from the definitions of the matrix product and $\boldsymbol{1}_N$. The row-sums of $G_N$ are all $1+(N-1)\rho$ by definition. Therefore
    \begin{equation}
        G_N\boldsymbol{1}_N=\{1+(N-1)\rho\}\boldsymbol{1}_N.
    \end{equation}
\end{proof}
\noindent Specifically, consider the 1-dimensional case:
\begin{equation}
    G_1\boldsymbol{x}_1=\lambda_1\boldsymbol{x}_1.
\end{equation}
As $G_1=I_1$, where $I_N$ is the identify matrix of dimension $N$, this equation becomes $\boldsymbol{x}_1=\lambda_1\boldsymbol{x}_1$, which is trivially solved by $\lambda_1=1$ and $\boldsymbol{x}_1=\boldsymbol{1}_1$. These solutions are consistent with the theorem.
\begin{corollary}\corlab{orth}
    Let ${\boldsymbol{x}_i}$ for $i\in[1,N]$ represent the set of eigenvalues of $G_N$. \Lemref{toprow} shows that there exists an $\boldsymbol{x}_1\propto\boldsymbol{1}_N$. If we also require that the eigenvectors form an orthogonal basis in $\mathbb{R}^N$ then it immediately follows that $\boldsymbol{1}^T_N\boldsymbol{x}_i=0$ for all $i\in[2,N]$.
\end{corollary}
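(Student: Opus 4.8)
The plan is to derive the orthogonality relation directly from the definition of an orthogonal basis together with the proportionality established in \Lemref{toprow}, so that essentially no computation is required.

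First I would record that, by \Lemref{toprow}, the distinguished eigenvector satisfies $\boldsymbol{x}_1=c\,\boldsymbol{1}_N$ for some scalar $c$, and that $c\ne0$ because an eigenvector is by convention a nonzero vector. Next, I would invoke the hypothesis that the full collection $\{\boldsymbol{x}_i\}_{i=1}^N$ is an orthogonal basis of $\mathbb{R}^N$: by the very definition of orthogonality, distinct basis members are mutually orthogonal, so $\boldsymbol{x}_1^T\boldsymbol{x}_i=0$ for every $i\in[2,N]$. Substituting $\boldsymbol{x}_1=c\,\boldsymbol{1}_N$ gives $c\,\boldsymbol{1}_N^T\boldsymbol{x}_i=0$, and dividing through by the nonzero $c$ yields the claimed identity $\boldsymbol{1}_N^T\boldsymbol{x}_i=0$.

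Since the argument is a one-line consequence of the definitions, there is no real obstacle; the only point requiring a moment's care is the justification that $c\ne0$, which is needed before one may cancel it. It is worth remarking that the conclusion can also be reached without explicitly assuming an orthogonal basis: because $G_N$ is symmetric, eigenvectors belonging to distinct eigenvalues are automatically orthogonal, and for $\rho\ne0$ every $\boldsymbol{x}_i$ with $i\ge2$ has eigenvalue $1-\rho\ne1+(N-1)\rho$. The stated hypothesis is simply the cleaner route, and it additionally covers the degenerate case $\rho=0$ where all eigenvalues coincide and orthogonality must be imposed by choice (e.g.\ via the Gram--Schmidt procedure, as noted in \secref{eigen}).
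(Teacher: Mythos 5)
Your proof is correct and matches the paper's treatment: the paper offers no separate proof, regarding the claim as an immediate consequence of \lemref{toprow} together with the imposed orthogonality, which is exactly the one-line substitution-and-cancellation argument you spell out. Your closing remark about the spectral-theorem route for $\rho\ne0$ is accurate and consistent with the paper's own follow-up comment that the orthogonality assumption is convenient rather than necessary.
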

\noindent\Corref{orth} is useful, but it is not \textit{necessary} to make the assumption that the eigenvectors form an orthogonal basis and it's truth is conditional on that assumption. However \lemref{iterate} is generally true:
\begin{lemma}\lemlab{iterate}
    If $\boldsymbol{x}_N$ is an eigenvector of $G_N$ for $N>1$, and $\boldsymbol{x}_N$ is not proportional to $\boldsymbol{1}_N$, then there exists a vector $\boldsymbol{x}_{N+1}$, of dimension $N+1$, which is an eigenvector of $G_{N+1}$ and is identical to $\boldsymbol{x}_N$ apart from an added $0$ in the final row and has the same eigenvalue. Additionally, the sum of the elements of both of these vectors is zero.
\end{lemma}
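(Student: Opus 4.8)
The plan is to exploit the rank-one structure of the isotropic correlation matrix, namely $G_N=(1-\rho)I_N+\rho\,\boldsymbol{1}_N\boldsymbol{1}_N^T$, which follows immediately from the definition since the off-diagonal entries are all $\rho$. Written in this form the eigenvalue equation collapses to a single scalar relation that every component must satisfy, and this delivers both the value of the eigenvalue and the element-sum condition in one stroke.

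First I would pin down the eigenvalue of $\boldsymbol{x}_N$. Writing the $i$-th row of $G_N\boldsymbol{x}_N=\lambda\boldsymbol{x}_N$ and letting $s=\boldsymbol{1}_N^T\boldsymbol{x}_N$ denote the element sum, each row reduces to $(\lambda-1+\rho)x_i=\rho s$, a relation holding for every $i$. If $\lambda\ne 1-\rho$, this forces every $x_i$ to equal the common value $\rho s/(\lambda-1+\rho)$, so that $\boldsymbol{x}_N\propto\boldsymbol{1}_N$, contradicting the hypothesis. Hence $\lambda=1-\rho$, and feeding this back into the same row relation gives $\rho s=0$, so that $s=0$ whenever $\rho\ne 0$. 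This simultaneously establishes that the eigenvalue is $1-\rho$ and that the elements of $\boldsymbol{x}_N$ sum to zero.

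Next I would form the extended vector $\boldsymbol{x}_{N+1}=(\boldsymbol{x}_N^T,0)^T$ and verify directly that it is an eigenvector of $G_{N+1}$ with the same eigenvalue. Using $G_{N+1}=(1-\rho)I_{N+1}+\rho\,\boldsymbol{1}_{N+1}\boldsymbol{1}_{N+1}^T$ one computes $G_{N+1}\boldsymbol{x}_{N+1}=(1-\rho)\boldsymbol{x}_{N+1}+\rho\bigl(\boldsymbol{1}_{N+1}^T\boldsymbol{x}_{N+1}\bigr)\boldsymbol{1}_{N+1}$. Since appending a zero leaves the element sum unchanged, $\boldsymbol{1}_{N+1}^T\boldsymbol{x}_{N+1}=s=0$, the rank-one term vanishes, and $G_{N+1}\boldsymbol{x}_{N+1}=(1-\rho)\boldsymbol{x}_{N+1}$. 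The element sum of $\boldsymbol{x}_{N+1}$ is again $s=0$, which completes every assertion of the statement.

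The only genuine subtlety is the step producing the vanishing element sum: it must be derived as a consequence of $\boldsymbol{x}_N$ being an eigenvector (together with $\rho\ne 0$), \emph{not} from any orthogonality assumption, so I would take care to obtain it from the scalar row relation rather than appeal to \corref{orth}, whose conclusion is only conditional on the choice of an orthogonal basis. The degenerate case $\rho=0$, where $G_N=I_N$ and the element-sum conclusion genuinely fails for a generic eigenvector, should be excluded or flagged separately, since the induction building the eigendecomposition is only of interest for $\rho\ne 0$.
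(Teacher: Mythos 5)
Your proof is correct, but it takes a genuinely different route from the paper's. The paper writes $G_{N+1}$ in block form as $\bigl(\begin{smallmatrix}G_N&\rho\boldsymbol{1}_N\\ \rho\boldsymbol{1}_N^T&1\end{smallmatrix}\bigr)$, splits the eigenvalue equation into two block equations, substitutes $G_N\boldsymbol{x}_N=\lambda_m\boldsymbol{x}_N$ to force $x_{N+1}=0$ (when $\rho\ne0$), and then reads the zero element-sum off the second block equation. You instead use the rank-one decomposition $G_N=(1-\rho)I_N+\rho\,\boldsymbol{1}_N\boldsymbol{1}_N^T$, reduce the eigenvalue equation to the scalar row relation $(\lambda-1+\rho)x_i=\rho s$, and extract both the eigenvalue and $s=0$ from it before verifying the extension directly. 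Your approach buys something the paper's does not: it identifies the eigenvalue of every non-$\boldsymbol{1}_N$ eigenvector as exactly $1-\rho$ at this stage (the paper defers that to \lemref{botrow} and the induction in \theref{eigen}), and it makes transparent \emph{why} appending a zero preserves the eigen-relation (the rank-one term is annihilated because $s=0$). The paper's block form, by contrast, is the shape that drives its induction on $N$ and is reused in \corref{iterate}, so it is better adapted to the surrounding structure even if less illuminating locally. Your handling of $\rho=0$ is also more honest than the paper's: the paper's closing claim that $\boldsymbol{1}_N^T\boldsymbol{x}_N=\lambda\boldsymbol{1}_N^T\boldsymbol{x}_N$ ``is solved by'' a zero sum does not actually establish that the sum must vanish when $G_N=I_N$ (it cannot, since e.g.\ $\boldsymbol{e}_1$ is then an eigenvector with nonzero sum), and you are right that this degenerate case needs to be excluded rather than argued around.
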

\begin{proof}
If $G_N$ and $G_{N+1}$ are isotropic covariance matrices then it follows from their definition that $G_{N+1}$ may be written in the block-matrix form:
    \begin{equation}
        G_{N+1}=
        \begin{pmatrix}
        G_N&\rho\boldsymbol{1}_N\\
        \rho\boldsymbol{1}_N^T&1
        \end{pmatrix}.
    \end{equation}
The eigenvalue equation for $G_{N+1}$ may be written in block form as
\begin{equation}
    \begin{pmatrix}
        G_N&\rho\boldsymbol{1}_N\\
        \rho\boldsymbol{1}_N^T&1
    \end{pmatrix}
    \begin{pmatrix}
    \boldsymbol{x}_N\\x_{N+1}
    \end{pmatrix}=
    \lambda\begin{pmatrix}
    \boldsymbol{x}_N\\x_{N+1}
    \end{pmatrix}   
\end{equation}
for scalar $x_{N+1}$. Following the rules of matrix multiplication, this produces two equations
\begin{align}
    G_N\boldsymbol{x}_N+\rho x_{N+1}\boldsymbol{1}_N&
    =\lambda\boldsymbol{x}_N\eqnlab{GNa}\\
    \mathrm{and}\ \rho\boldsymbol{1}_N^T\boldsymbol{x}_N+x_{N+1}&
    =\lambda x_{N+1}.\eqnlab{GNb}
\end{align}
Let $\lambda_m$ for $m\in[2,N]$ be an eigenvalue of $G_N$ associated with an eigenvector which is not proportional to $\boldsymbol{1}_N$. Therefore $G_N\boldsymbol{x}_N=\lambda_m\boldsymbol{x}_N$. Substituting this into \eqnref{GNa} gives
\begin{equation}
    \rho x_{N+1}\boldsymbol{1}_N=\boldsymbol{0}_N\;
    \Rightarrow\;x_{N+1}=0\;\mathrm{or}\;\rho=0.
\end{equation}
For $\rho=0$, \eqnref{GNb} is trivially solved by $x_{N+1}=0$. For $\rho\ne0$, \eqnref{GNa} is solved by $x_{N+1}=0$. Therefore the eigenvalue equation for $G_{N+1}$ is solved by $\boldsymbol{x}_{N+1}$ with eigenvalue $\lambda_m$ when $x_{N+1}=0$. This proves the first part of the lemma. 

Now substitute $x_{N+1}=0$ into \eqnref{GNb}. This gives
\begin{equation}
    \rho\boldsymbol{1}_N^T\boldsymbol{x}_N=0,
\end{equation}
which proves the second part of the lemma when $\rho\ne0$. When $\rho=0$ the $G_N$ is equal to the identity matrix and the eigenvalue equation becomes $\boldsymbol{x}_N=\lambda\boldsymbol{x}_N$ which gives
\begin{equation}
\boldsymbol{1}_N^T\boldsymbol{x}_N=\lambda\boldsymbol{1}_N^T\boldsymbol{x}_N.
\end{equation}
This is solved by $\boldsymbol{1}_N^T\boldsymbol{x}_N=0$ for all $\boldsymbol{x}_N$ not proportional to $\boldsymbol{1}_N$.
\end{proof}

Note that $G_1$ cannot be put in the block diagonal form given and, in addition, it does not have any eigenvectors that are not proportional to $\boldsymbol{1}_1$. This is why the lemma only applies to $N>1$ and cannot be used to construct the second eigenvector of $G_2$.
\begin{corollary}
    For $N>1$, there must exist an eigenvector of $G_N$, $\boldsymbol{x}_N$, that satisfies $\boldsymbol{1}_N^T\boldsymbol{x}_N=0$.
\end{corollary}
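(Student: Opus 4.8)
The plan is to prove the existence of a zero-sum eigenvector by induction on $N\ge2$, using \lemref{iterate} to carry out the inductive step. The base case $N=2$ must be handled separately by direct computation, because \lemref{iterate} applies only for $N>1$ and therefore cannot lift a property of $G_1$ up to $G_2$; indeed, as already noted in the text, $G_1$ has no eigenvector that fails to be proportional to $\boldsymbol{1}_1$, so the recursion cannot be seeded there.

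For the base case I would simply verify that $(1,-1)^T$ is an eigenvector of $G_2$. A one-line calculation gives $G_2(1,-1)^T=(1-\rho)(1,-1)^T$, and clearly $\boldsymbol{1}_2^T(1,-1)^T=0$, so a zero-sum eigenvector exists when $N=2$.

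For the inductive step I would assume that $G_N$ possesses an eigenvector $\boldsymbol{x}_N$ with $\boldsymbol{1}_N^T\boldsymbol{x}_N=0$. The only hypothesis of \lemref{iterate} that needs checking is that $\boldsymbol{x}_N$ is not proportional to $\boldsymbol{1}_N$; this is immediate, since any nonzero multiple of $\boldsymbol{1}_N$ has element-sum equal to $N$ times the multiplier, which cannot vanish for $N>0$, whereas $\boldsymbol{x}_N$ has element-sum zero. With that hypothesis met, \lemref{iterate} produces an eigenvector $\boldsymbol{x}_{N+1}$ of $G_{N+1}$, namely $\boldsymbol{x}_N$ with a $0$ appended in the final row, whose elements again sum to zero. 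This closes the induction.

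There is essentially no hard step here: all the substantive content already lives in \lemref{iterate}, which was established precisely to enable this recursion, and the only point requiring any care is the separate treatment of the base case. (As an aside, one could bypass the induction entirely by observing that $\boldsymbol{e}_1-\boldsymbol{e}_2=(1,-1,0,\dots,0)^T$ is an eigenvector of $G_N$ with eigenvalue $1-\rho$ for every $N>1$, and that it sums to zero; but the inductive argument is the one that fits the surrounding development and reuses the lemma just proved.)
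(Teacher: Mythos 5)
Your proof is correct and takes essentially the same route as the paper: an induction on $N$ whose inductive step is supplied by \lemref{iterate}, which guarantees both that appending a zero yields an eigenvector of $G_{N+1}$ and that the element-sum remains zero. You are in fact slightly more careful than the paper's own proof, which leaves the base case implicit; your direct verification that $(1,-1)^T$ is a $(1-\rho)$-eigenvector of $G_2$ with zero element-sum cleanly seeds the recursion where the lemma itself cannot (since it cannot lift anything from $G_1$).
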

\begin{proof}
    There exists a vector, $\boldsymbol{x}_{N+1}$ for $N>1$, defined to be an eigenvector, $\boldsymbol{x}_N$, of $G_N$ with an additional zero row. Since that vector is an eigenvector of $G_{N+1}$ and $\boldsymbol{1}_N^T\boldsymbol{x}_N$=0, by the theorem just proved, it immediately follows that $\boldsymbol{1}_{N+1}^T\boldsymbol{x}_{N+1}=0$ for $N\ge1$. By induction there must always be at least one eigenvector with that satisfies $\boldsymbol{1}_N^T\boldsymbol{x}_N=0$ when $N>1$.
\end{proof}
\begin{corollary}\corlab{iterate}
    If $\boldsymbol{x}_m$ is an eigenvector of $G_M$, for $m\in[2,M]$ and $M>1$, with $\boldsymbol{1}^T\boldsymbol{x}_m=0$, i.e.\ requiring that it not be equal to $\boldsymbol{1}_M$, then the vector with block form
    \begin{equation}
        \begin{pmatrix}
            \boldsymbol{x}_m\\\boldsymbol{0}_{N-M}
        \end{pmatrix}
    \end{equation}
    is an eigenvector of $G_N$ for all $N>M$.    
\end{corollary}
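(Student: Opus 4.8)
The plan is to prove the claim by induction on $N$, using \lemref{iterate} as the engine that advances the dimension by one at each step. The crucial observation is that \lemref{iterate} accomplishes two things at once: it takes an eigenvector of $G_N$ that is not proportional to $\boldsymbol{1}_N$, produces an eigenvector of $G_{N+1}$ by appending a single zero in the final row with the \emph{same} eigenvalue, and additionally certifies that this new vector again has vanishing element-sum. That zero-sum conclusion is precisely the hypothesis needed to re-apply the lemma, so the induction closes on itself without any extra bookkeeping.

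For the base case I would take $N=M+1$. The hypothesis supplies an eigenvector $\boldsymbol{x}_m$ of $G_M$ with $\boldsymbol{1}_M^T\boldsymbol{x}_m=0$; since $M>1$ and the sum vanishes, $\boldsymbol{x}_m$ cannot be proportional to $\boldsymbol{1}_M$ (a nonzero multiple $c\boldsymbol{1}_M$ has sum $cM\ne0$), so the input hypothesis of \lemref{iterate} is met. The lemma then yields an eigenvector of $G_{M+1}$ equal to $\boldsymbol{x}_m$ with one appended zero, carrying the same eigenvalue and again summing to zero. This is exactly the block vector asserted in the statement for $N=M+1$.

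For the inductive step I would assume that appending $N-M$ zeros to $\boldsymbol{x}_m$ yields an eigenvector $\boldsymbol{y}_N$ of $G_N$ with $\boldsymbol{1}_N^T\boldsymbol{y}_N=0$. By the same reasoning as in the base case the zero-sum property forces $\boldsymbol{y}_N$ to be non-proportional to $\boldsymbol{1}_N$, and $N\ge M+1>1$, so \lemref{iterate} applies once more and produces an eigenvector of $G_{N+1}$ obtained from $\boldsymbol{y}_N$ by appending one further zero, with the same eigenvalue and again zero element-sum. Since appending one zero to $\boldsymbol{y}_N$ is the same as appending $N+1-M$ zeros to $\boldsymbol{x}_m$, the step is complete and the induction delivers the result for every $N>M$.

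The only point requiring genuine care — and the nearest thing to an obstacle — is verifying at each stage that the vector fed into \lemref{iterate} is truly not proportional to $\boldsymbol{1}$, since the lemma's conclusion is void otherwise. This is handled uniformly by carrying the zero-sum condition as the inductive invariant: a nonzero vector summing to zero can never be a scalar multiple of the all-ones vector, and the lemma itself guarantees the invariant is inherited at each increment, so no separate verification is needed beyond the single observation made in the base case.
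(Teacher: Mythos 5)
Your proof is correct and takes essentially the same route as the paper, which disposes of this corollary with the single line ``This follows by induction from \lemref{iterate}.'' You have simply written out the induction the paper leaves implicit, and your one substantive addition --- observing that the zero-sum invariant guarantees non-proportionality to the all-ones vector at every step, so the lemma's hypothesis is always satisfied --- is a worthwhile detail the paper glosses over.
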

\begin{proof}
    This follows by induction from \lemref{iterate}.
\end{proof}
\begin{lemma}\lemlab{botrow}
    For $N>1$, $G_N$ has an eigenvector proportional to $\boldsymbol{x}_N$ where all the elements are $-1$ apart from the final one which is equal to $N-1$. The eigenvalue associated with this eigenvector, $\lambda_N$, has the value $1-\rho$.
\end{lemma}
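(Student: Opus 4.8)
The plan is to verify the eigenvalue equation $G_N\boldsymbol{x}_N=(1-\rho)\boldsymbol{x}_N$ directly for the stated vector. The structural fact that drives everything is that $\boldsymbol{x}_N$, having its first $N-1$ entries equal to $-1$ and a final entry equal to $N-1$, has vanishing element sum: $\boldsymbol{1}_N^T\boldsymbol{x}_N=-(N-1)+(N-1)=0$. This places $\boldsymbol{x}_N$ in the orthogonal complement of $\boldsymbol{1}_N$, exactly where \corref{orth} tells us every non-market eigenvector must lie, so it is natural to expect it to carry the degenerate eigenvalue $1-\rho$ rather than $1+(N-1)\rho$.

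First I would record the rank-one-plus-scalar decomposition of the isotropic correlation matrix, $G_N=(1-\rho)I_N+\rho\,\boldsymbol{1}_N\boldsymbol{1}_N^T$, which is immediate from the definition since the diagonal of the right-hand side equals $(1-\rho)+\rho=1$ while every off-diagonal entry equals $\rho$. Acting on an arbitrary vector $\boldsymbol{x}$ this gives $G_N\boldsymbol{x}=(1-\rho)\boldsymbol{x}+\rho(\boldsymbol{1}_N^T\boldsymbol{x})\boldsymbol{1}_N$, so the rank-one contribution vanishes whenever $\boldsymbol{1}_N^T\boldsymbol{x}=0$. Substituting $\boldsymbol{x}=\boldsymbol{x}_N$ and invoking the zero-sum property then yields $G_N\boldsymbol{x}_N=(1-\rho)\boldsymbol{x}_N$ in one line, which is the claim.

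As an elementary cross-check I would also compute coordinate by coordinate: row $i$ of $G_N$ contributes $(\boldsymbol{x}_N)_i+\rho\sum_{j\ne i}(\boldsymbol{x}_N)_j$, and since the total sum is zero the off-diagonal part equals $-\rho(\boldsymbol{x}_N)_i$, leaving $(1-\rho)(\boldsymbol{x}_N)_i$ in every entry. It only remains to note that $\boldsymbol{x}_N\ne\boldsymbol{0}_N$ for $N>1$, so this is a genuine eigenvector. There is no real obstacle here---the result is a direct consequence of the additive structure of $G_N$---so rather than a hard step I would stress the \emph{purpose} of this particular $\boldsymbol{x}_N$: it is precisely the final row of the matrix $Q_N$ in \eqnref{Qn}, and feeding it into \corref{iterate} generates the whole family of $1-\rho$ eigenvectors needed to finish the eigendecomposition by induction.
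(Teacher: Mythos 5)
Your proof is correct. The paper's own proof is simply your ``cross-check'': it writes out the product $G_N\boldsymbol{x}_N$ row by row and observes that each entry comes out as $1-\rho$ times the corresponding entry of $\boldsymbol{x}_N$, with no further commentary. Your primary argument is structurally different and arguably stronger: by writing $G_N=(1-\rho)I_N+\rho\,\boldsymbol{1}_N\boldsymbol{1}_N^T$ you show that \emph{every} vector orthogonal to $\boldsymbol{1}_N$ is an eigenvector with eigenvalue $1-\rho$, so the specific $\boldsymbol{x}_N$ of the lemma is just one convenient representative of an $(N-1)$-dimensional eigenspace. The paper never states this decomposition explicitly (it appears only implicitly in the proof of \Theref{inv2}, where the rows of $G_N$ are written as $(1-\rho)\boldsymbol{e}_i+\rho\boldsymbol{1}_N$), and instead builds up the degenerate eigenspace laboriously through \lemref{iterate} and \corref{iterate}. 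Your route buys generality and would let you collapse much of the appendix's induction into a single observation; the paper's route buys nothing beyond being self-contained and requiring only the definition of matrix multiplication. One small point of emphasis: your remark that $\boldsymbol{x}_N\ne\boldsymbol{0}_N$ is needed for this to be a genuine eigenvector is a detail the paper omits entirely, and it is worth keeping.
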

\begin{proof}
From the definition of the matrix product, $G_N\boldsymbol{x}_N$ is
\begin{equation}
\begin{pmatrix}
1&\rho&\cdots&\rho\\
\rho&1&\cdots&\rho\\
\vdots&&\ddots&\vdots\\
\rho&\rho&\cdots&1
\end{pmatrix}
\begin{pmatrix}
    -1\\-1\\\vdots\\N-1
\end{pmatrix}=
\begin{pmatrix}
-(1-\rho)\\-(1-\rho)\\\vdots\\(1-\rho)(N-1)
\end{pmatrix}=
(1-\rho)
\begin{pmatrix}
    -1\\-1\\\vdots\\N-1
\end{pmatrix}.
\end{equation}
\end{proof}

These lemmas require that the eigenvectors of $G_2$ are
proportional to
\begin{equation}
    \begin{pmatrix}
        1\\1
    \end{pmatrix}\;\mathrm{and}\;
    \begin{pmatrix}
        -1\\\phantom{-}1
    \end{pmatrix}.
\end{equation}
Since $G_2$ is a real symmetric matrix of dimension $2$ these two vectors represent the complete set of eigenvectors for this problem size\cite[sec. 6B]{axler2015linear}. They may be normalized to choice.
\begin{definition}
    The projection matrix, $P_N$ for $N\ge1$, is an integer valued square matrix of dimension $N\ge1$ defined as follows:
    \begin{enumerate}[label=(\roman*)]
        \item all elements of the top row are $1$;
        \item all elements of the lower triangle are $-1$;
        \item all elements in the the upper triangle are $0$; and,
        \item for row $i>1$, the diagonal element $[P_N]_{ii}$ is equal to the number of negative elements in the same row, which is $i-1$.
    \end{enumerate}
\end{definition}
\begin{theorem}\thelab{eigen}
    The eigenvectors of $G_N$ are proportional to the vectors formed from the rows of $P_N$.
\end{theorem}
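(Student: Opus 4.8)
The plan is to show that each row of $P_N$ coincides, up to the trailing zeros, with an eigenvector already constructed in the preceding lemmas, and then to upgrade ``these are eigenvectors'' to ``these are \emph{the} eigenvectors'' by a linear-independence argument. First I would dispose of the top row: by definition its entries are all $1$, so it equals $\boldsymbol{1}_N$, which \lemref{toprow} identifies as an eigenvector with eigenvalue $1+(N-1)\rho$.

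Next, for each row index $i\in[2,N]$, I would read off from the definition of $P_N$ that the row has $-1$ in columns $1,\dots,i-1$, the diagonal value $i-1$ in column $i$, and $0$ in columns $i+1,\dots,N$. Its leading length-$i$ block $(-1,\dots,-1,i-1)^T$ is exactly the vector that \lemref{botrow} shows to be an eigenvector of $G_i$ with eigenvalue $1-\rho$. Its element-sum is $(i-1)(-1)+(i-1)=0$, so it is not proportional to $\boldsymbol{1}_i$ and meets the hypothesis of \corref{iterate}. Applying \corref{iterate} with $M=i$ to append $N-i$ trailing zeros yields an eigenvector of $G_N$, still with eigenvalue $1-\rho$, and this padded vector is precisely row $i$ of $P_N$ (for $i=N$ no padding is needed and \lemref{botrow} applies directly). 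This accounts for all $N$ rows.

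Finally I would establish that the $N$ rows are linearly independent, so that they form a complete eigenbasis. Row $1$ lies in the eigenspace for $1+(N-1)\rho$ while rows $2,\dots,N$ lie in that for $1-\rho$; for $\rho\ne0$ these eigenvalues differ, so row $1$ is automatically independent of the rest. Among rows $2,\dots,N$ a backward ``staircase'' argument settles independence: in any dependence relation, column $N$ is nonzero only in row $N$ (entry $N-1$), forcing that coefficient to vanish; column $N-1$ then forces row $N-1$'s coefficient to vanish; and so on down to row $2$. The degenerate case $\rho=0$, where $G_N=I_N$, is covered by the same independence computation. Having $N$ independent eigenvectors then fixes the entire spectrum as one copy of $1+(N-1)\rho$ together with $N-1$ copies of $1-\rho$, and shows the rows of $P_N$ exhaust the eigenvectors up to proportionality.

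I expect the construction half to be essentially immediate from \lemref{botrow} and \corref{iterate}; the real work, and the main obstacle, is the completeness half. Producing $N$ eigenvectors is not by itself enough — one must verify they are independent, and in particular that rows $2,\dots,N$ genuinely span the $(N-1)$-dimensional degenerate eigenspace rather than collapsing. The elementary triangular independence argument is what delivers this, and it is the step I would present most carefully.
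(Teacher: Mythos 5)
Your proof is correct, and it reaches the result by a route that is organized differently from the paper's. The paper proves the theorem by induction on $N$ at the top level: assuming the rows of $P_N$ give the eigenvectors of $G_N$, it observes that the first $N$ rows of $P_{N+1}$ are the rows of $P_N$ suitably augmented, invokes \lemref{toprow} and \corref{iterate} to keep them as eigenvectors of $G_{N+1}$, and then argues that the one remaining eigenvector ``must be'' the new bottom row supplied by \lemref{botrow}, leaning on the counting fact that an $N$-dimensional symmetric matrix has $N$ eigenvectors. You instead identify each row $i\in[2,N]$ directly as the \lemref{botrow} eigenvector of $G_i$ padded with $N-i$ zeros via a single application of \corref{iterate} with $M=i$, which dispenses with the outer induction (the induction is already packaged inside \corref{iterate}), and you then close the completeness gap with an explicit linear-independence argument (distinct eigenvalues for row $1$ versus the rest, plus the triangular staircase on rows $2,\dots,N$). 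This last step is the genuine improvement: the paper's ``the remaining eigenvector must be proportional to the last row'' is only justified once one knows the $N+1$ candidate vectors are independent, something the paper defers implicitly to \lemref{Pdiag} (orthogonality of the rows of $P_N$) rather than establishing inside the proof. Your only loose end is the remark that the $\rho=0$ case is ``covered by the same independence computation'' --- the eigenvalue-separation argument for row $1$ does not apply there, but since $G_N=I_N$ makes every nonzero vector an eigenvector (or, alternatively, since the rows of $P_N$ are mutually orthogonal and hence independent regardless of $\rho$), that case is immediate and the gap is cosmetic.
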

\begin{proof}
    For $N=1$ the theorem is trivially true as both $G_1$ and $P_1$ are a square matrix of dimension $1$ with element $1$ by definition. 
    
    For $N>1$, from \lemref{toprow}, there is always an eigenvector, $\boldsymbol{x}_1$, of $G_N$ proportional to $\boldsymbol{1}_N$ and, by definition, the top row of $P_N$ is equal to $\boldsymbol{1}_N$. Therefore the top row of $P_N$ is always proportional to an eigenvector of $G_N$. Also, from \lemref{botrow}, there is always an eigenvector of $G_N$, proportional to $\boldsymbol{x}_N$ as defined there. This is equal to the bottom row of $P_N$ by its definition. Therefore the bottom row of $P_N$ is always proportional to an eigenvector of $G_N$. For $N=2$ there are no other rows to consider and the theorem is true.

    If the theorem is true for $N\ge2$ then the eigenvectors of $G_N$ are proportional to the rows of $P_N$. From the definition of $P_N$ the first $N$ rows of $P_{N+1}$ are equal to the rows of $P_N$ when the first row is augmented by an additional $1$ in the final column and the remaining $N-1$ rows are augmented by an additional $0$ in the final column. By \lemref{toprow} and \corref{iterate} there are $N$ eigenvectors of $G_{N+1}$ that are proportional to these rows and, by \lemref{botrow}, the remaining\footnote{From the standard theorems for the Eigenvalue Problem, a matrix of dimension $N$ has exactly $N$ eigenvectors\cite[ch. 4]{arfken1999mathematical}.} $(N+1)$th.\ eigenvector of $G_{N+1}$ must be constructed in a manner that makes it proportional to the $(N+1)$th.\ row of $P_N$. Therefore, the theorem is true for $G_{N+1}$ and $P_{N+1}$ if it is true for $G_N$ and $P_N$. Since the theorem is true for both $N=1$ and $N=2$ it is therefore true for all $N$ by induction.
\end{proof}

\Theref{eigen} establishes that the eigenvectors of $G_N$ are proportional to the rows of $P_N$ for all $N\ge1$. To complete the eigendecomposition of $G_N$ it also is necessary to construct a matrix from $P_N$ that represents an orthogonal basis in $\mathbb{R}^N$. Due to the particular properties of $P_N$ this is relatively straightforward.
\begin{lemma}\lemlab{Pdiag}
    The matrix $P_N^TP_N$ is diagonal.
\end{lemma}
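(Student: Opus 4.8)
The plan is to prove the mutual orthogonality of the \emph{rows} of $P_N$ directly, since by \theref{eigen} these rows are exactly the eigenvectors of $G_N$. I note at the outset that, because the eigenvectors occupy the rows of $P_N$, the Gram matrix whose vanishing off\nobreakdash-diagonal entries encode their orthogonality is $P_NP_N^T$, not $P_N^TP_N$: the product printed in the statement has its factors transposed. This is not a harmless slip, since $P_N^TP_N$ is genuinely not diagonal (its $(j,k)$ entry for $j<k$ works out to $N-2k+2$, e.g.\ the $(1,2)$ entry equals $1$ for $N=3$). I will therefore establish that $P_NP_N^T$ is diagonal, which is the property the eigendecomposition actually requires and which the printed statement intends.

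First I would record the row structure from the definition: row $1$ is $\boldsymbol{1}_N^T$, while for $i\ge2$ row $i$ carries the value $-1$ in columns $1,\dots,i-1$, the value $i-1$ in column $i$, and $0$ in columns $i+1,\dots,N$. The off\nobreakdash-diagonal entry $(1,j)$ of $P_NP_N^T$ for $j\ge2$ is then the dot product of $\boldsymbol{1}_N^T$ with row $j$, i.e.\ the row\nobreakdash-sum of row $j$, which is $(j-1)(-1)+(j-1)=0$. This is precisely the identity already exploited in \lemref{iterate} and \lemref{botrow}, namely that every eigenvector other than the one proportional to $\boldsymbol{1}_N$ is annihilated by $\boldsymbol{1}_N^T$.

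Next I would treat the generic off\nobreakdash-diagonal entry $(i,j)$ with $2\le i<j$ by tracking the overlap of the supports of the two rows. Row $i$ is supported on columns $1,\dots,i$ and row $j$ on columns $1,\dots,j$, so only columns $1,\dots,i$ contribute. In columns $1,\dots,i-1$ both rows carry $-1$, giving $+1$ apiece and a running total of $i-1$; in column $i$ row $i$ carries its diagonal value $i-1$ while row $j$ carries $-1$ (column $i<j$ lies in the lower triangle of row $j$), contributing $-(i-1)$; and beyond column $i$ row $i$ vanishes. The two contributions cancel exactly, so the entry is $0$. Hence every off\nobreakdash-diagonal entry of $P_NP_N^T$ is zero, and the diagonal entries are the squared row norms $N,2,6,\dots,N(N-1)$---precisely the normalisations factored out in \eqnref{Qn}.

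The only real obstacle is that one cannot shortcut the lower rows by invoking orthogonality of eigenvectors for distinct eigenvalues: rows $2,\dots,N$ all share the degenerate eigenvalue $1-\rho$, so their mutual orthogonality is not automatic and must be verified through the support\nobreakdash-overlap bookkeeping above. Everything else is elementary, and the computation is in any case purely combinatorial in the integer matrix $P_N$, hence independent of $\rho$.
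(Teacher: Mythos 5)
Your computation is essentially the paper's own proof: the paper likewise takes $\boldsymbol{p}_i$ to be the rows of $P_N$, shows $\boldsymbol{p}_1^T\boldsymbol{p}_j$ is the row sum of $\boldsymbol{p}_j$ (zero by construction), and for $1<i<j$ reduces $\boldsymbol{p}_i^T\boldsymbol{p}_j$ to $-1$ times the sum of the elements of $\boldsymbol{p}_i$ --- exactly the support-overlap cancellation you carry out, with the same diagonal entries $N$ and $j(j-1)$. You are also right about the transposition: since the eigenvectors occupy the \emph{rows}, the Gram matrix these inner products assemble is $P_NP_N^T$, and your counterexample is sound --- for $N=3$ the $(1,2)$ entry of $P_N^TP_N$ is $1$, and in general its $(j,k)$ entry for $j<k$ is $N-2k+2$, which vanishes only when $k=(N+2)/2$. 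So the printed $P_N^TP_N$ is a typo, both in the lemma and in the opening sentence of the paper's proof (which asserts that $P_N^TP_N$ is the matrix of pairwise inner products of the rows --- that matrix is $P_NP_N^T$), and it propagates into the proof of \theref{Qorth}; the corrected statement is what the computation actually proves, and it suffices downstream since $Q_NQ_N^T=I_N$ for the square matrix $Q_N$ still gives orthogonality. Your closing caveat is also well taken: rows $2,\dots,N$ all belong to the degenerate eigenvalue $1-\rho$, so their mutual orthogonality does not follow from the spectral theorem and genuinely requires the direct verification both you and the paper perform.
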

\begin{proof}
    Let $\boldsymbol{p}_i$ for $i\in[1,N]$ represent the vector formed from the $i$th row of $P_N$ for $N>1$. $P_N^TP_N$ is the matrix formed from all of the possible inner products of pairs of members of the set $\{\boldsymbol{p}_i\}$. From the definition of $P_N$ it immediately follows that $\boldsymbol{p}_i^T\boldsymbol{p}_i>0$ for all $i$. Consider $\boldsymbol{p}_1^T\boldsymbol{p}_j$ for $j\in[2,N]$: as all of the elements of $\boldsymbol{p}_1$ are $1$, this is equal to the sum of the elements of $\boldsymbol{p}_j$ which is zero by construction. Consider $\boldsymbol{p}_i^T\boldsymbol{p}_j$ for $i>1$ and $i<j$: from the definition of $P_N$ this is equal to $-\boldsymbol{p}_i^T\boldsymbol{1}_i$ which is equal to $-1$ times the sum of the elements of $\boldsymbol{p}_i$ and so is also zero by construction. From the definition of the inner product $\boldsymbol{p}_i^T\boldsymbol{p}_j=\boldsymbol{p}_j^T\boldsymbol{p}_i$ for all $i,j$. Thus $P_N^TP_N$ is diagonal for $N>1$ and $P_1^TP_1$ is diagonal trivially. Therefore $P_N^TP_N$ is diagonal always.
\end{proof}
\begin{definition}
    For $N>1$, let $A_N$ be a integer valued diagonal matrix of dimension $N-1$ with the values $j(j-1)$ along the diagonal for $j\in[2,N]$.
\end{definition}
\begin{definition}
    Let $B_N$ be the diagonal matrix of dimension $N$ with the block-matrix form
    \begin{equation}
    B_N=\begin{pmatrix}
        N&\boldsymbol{0}_{N-1}^T\\
        \boldsymbol{0}_{N-1}&A_N
    \end{pmatrix}.       
    \end{equation}
\end{definition}
\begin{definition}
    Let the matrix $Q_N$ be a real valued square matrix of dimension $N$ given by\footnote{Since $B_N$ is diagonal the power $B_N^r$ may be defined to apply elementwise along the diagonal, for all $r\in\mathbb{R}$. Thus the inverse of $B_N$ is the matrix with the diagonal elements replaced by their reciprocals and the square root of $B_N$ is the matrix with diagonal elements replaced by their square roots.}
    \begin{equation}
        Q_N=B_N^{-1/2}P_N.
    \end{equation}
\end{definition}
\begin{theorem}\thelab{Qorth}
    The matrix $Q_N$ is orthogonal.
\end{theorem}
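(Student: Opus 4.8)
The plan is to verify directly that $Q_N Q_N^T = I_N$, which for a square real matrix is equivalent to $Q_N^T Q_N = I_N$ and hence to $Q_N$ being orthogonal. Since $B_N$ is diagonal, so is $B_N^{-1/2}$, and in particular $B_N^{-1/2}$ is symmetric; therefore
\begin{equation}
    Q_N Q_N^T = B_N^{-1/2} P_N P_N^T B_N^{-1/2}.
\end{equation}
The whole computation thus reduces to identifying the matrix $P_N P_N^T$, whose $(i,j)$ entry is the inner product $\boldsymbol{p}_i^T\boldsymbol{p}_j$ of the $i$th and $j$th rows of $P_N$.

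For the off-diagonal entries ($i\ne j$) I would simply invoke \lemref{Pdiag} (whose proof establishes that distinct rows of $P_N$ are mutually orthogonal), so that $\boldsymbol{p}_i^T\boldsymbol{p}_j=0$. For the diagonal entries I would compute the squared row norms straight from the definition of $P_N$. The first row is $\boldsymbol{1}_N$, giving $\boldsymbol{p}_1^T\boldsymbol{p}_1=N$. For $i\in[2,N]$ the $i$th row consists of $i-1$ entries equal to $-1$, a single entry $i-1$ on the diagonal, and zeros thereafter, so
\begin{equation}
    \boldsymbol{p}_i^T\boldsymbol{p}_i=(i-1)\cdot(-1)^2+(i-1)^2=(i-1)+(i-1)^2=i(i-1).
\end{equation}
Comparing with the definitions of $A_N$ and $B_N$, these are exactly the diagonal entries of $B_N$, namely $N$ in the first slot and $j(j-1)$ for $j\in[2,N]$, so the key identity is $P_N P_N^T = B_N$.

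Substituting this identity yields $Q_N Q_N^T = B_N^{-1/2} B_N B_N^{-1/2}=I_N$, which settles the claim for $N>1$; the case $N=1$ is immediate, since $B_1$, $P_1$, and hence $Q_1$, are all equal to the $1\times1$ identity matrix. The only step demanding any calculation is the evaluation of the row norms and their matching to the diagonal of $B_N$, so I do not anticipate a genuine obstacle: the substance of the result is carried entirely by the orthogonality of the rows of $P_N$ together with the fact that $B_N$ was constructed precisely to record their squared lengths. I would close by noting that, because $Q_N$ is square, $Q_N Q_N^T=I_N$ at once gives $Q_N^T=Q_N^{-1}$ and therefore $Q_N^T Q_N=I_N$ as well.
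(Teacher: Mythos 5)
Your proof is correct and follows essentially the same route as the paper: invoke the mutual orthogonality of the rows of $P_N$ (Lemma~\ref{lemma:Pdiag}) and check that the squared row norms $N$ and $i(i-1)$ are exactly the diagonal entries of $B_N$. If anything you are slightly more careful than the paper, which writes $P_N^TP_N$ and $Q_N^TQ_N$ while actually computing the row Gram matrices $P_NP_N^T$ and $Q_NQ_N^T$; your closing observation that the two conditions coincide for square matrices tidies up that point.
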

\begin{proof}
    It has already been shown that $P_N^TP_N$ is diagonal. Therefore it is sufficient to show that the elements along the diagonal of that matrix equal the elements along the diagonal of $B_N$. 
    
    From the definition of $P_N$, $\boldsymbol{p}_1^T\boldsymbol{p}_1=N$. Consider $\boldsymbol{p}_j$ for $j\in[2,N]$: the sum of the squares of the elements of this vector equals $(j-1)+(j-1)^2=j(j-1)$. Thus the values $\{\boldsymbol{p}_j^T\boldsymbol{p}_j\}$ are the diagonal elements of $A_N$ from the definition above. Therefore the diagonal elements of $P_N^TP_N$ are equal to the diagonal elements of $B_N$ and it follows that $Q_N^TQ_N=I_N$, where $I_N$ is the identity matrix of dimension $N$, and so $Q_N$ is an orthogonal matrix.
\end{proof}

\Theref{Qorth} shows that $Q_N$ is an orthogonal matrix and, since it is directly proportional to $P_N$, it represents an eigenmatrix of $G_N$. The theorems above also establish that all of the eigenvalues of $G_N$ are $1-\rho$ apart from the one associated with the top row of $Q_N$ which is $1+(N-1)\rho$. This is the complete eigendecomposition of $G_N$ into the form given in \secref{eigen}. 
\begin{corollary}\corlab{inv}
    The inverse of $G_N$ is $P_N^TD^{-1}B_NP_N$, where $G_N$, $P_N$ and $B_N$ are as given above and $D_N$ is a diagonal matrix with the eigenvalue corresponding to the top row of $P_N$, or $1+(N-1)\rho$, as the first element and $1-\rho$ for the others.
\end{corollary}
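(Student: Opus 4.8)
The plan is to read the inverse straight off the orthogonal eigendecomposition already assembled in \Theref{eigen} and \Theref{Qorth}, rather than inverting $G_N$ directly. First I would record the eigenvalue matrix. By \Lemref{toprow} the row of $P_N$ equal to $\boldsymbol{1}_N$ is an eigenvector with eigenvalue $1+(N-1)\rho$, while \Lemref{botrow} together with \Corref{iterate} supply the remaining $N-1$ eigenvectors, each with eigenvalue $1-\rho$. Collecting these into the diagonal matrix $D_N=\diag\{1+(N-1)\rho,\,1-\rho,\dots,1-\rho\}$ gives exactly the $D_N$ named in the statement.

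Next I would invoke orthogonality. Since \Theref{Qorth} establishes that $Q_N=B_N^{-1/2}P_N$ is orthogonal and its rows are the normalised eigenvectors of $G_N$, the diagonalisation $Q_NG_NQ_N^T=D_N$ holds, so that $G_N=Q_N^TD_NQ_N$. Because $Q_N$ is orthogonal, $Q_N^{-1}=Q_N^T$ and the inverse is obtained factor by factor,
\begin{equation}
G_N^{-1}=\left(Q_N^TD_NQ_N\right)^{-1}=Q_N^TD_N^{-1}Q_N .
\end{equation}

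Finally I would substitute $Q_N=B_N^{-1/2}P_N$, hence $Q_N^T=P_N^TB_N^{-1/2}$, and use the fact that $B_N^{-1/2}$ and $D_N^{-1}$ are both diagonal and therefore commute:
\begin{equation}
G_N^{-1}=P_N^TB_N^{-1/2}D_N^{-1}B_N^{-1/2}P_N=P_N^TD_N^{-1}B_N^{-1}P_N ,
\end{equation}
which is the asserted expression, the normalisation entering through $B_N^{-1}$, i.e.\ the reciprocals of the squared row-norms of $P_N$ computed in the proof of \Theref{Qorth}.

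The argument is essentially bookkeeping, so I do not expect a genuine obstacle; the one place to be careful is the order and inversion of the two diagonal factors. Because it is $Q_N$, and not $P_N$ itself, that is orthonormal, the squared-norm matrix $B_N$ must appear \emph{inverted}, and one must not accidentally leave $B_N$ (or $D_N$) un-inverted or on the wrong side. As a free consistency check I would confirm that the resulting matrix reduces to the closed form $G_N^{-1}=I_N/(1-\rho)-\rho\boldsymbol{1}_N\boldsymbol{1}_N^T/[(1-\rho)\{1+(N-1)\rho\}]$ quoted in the main text, and spot-check the $N=2$ case against the explicit inverse of a $2\times2$ correlation matrix.
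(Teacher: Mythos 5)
Your proof is correct and follows essentially the same route as the paper's own (very terse) proof, which likewise reads the inverse off the orthogonal eigendecomposition $G_N=Q_N^TD_NQ_N$ and then substitutes $Q_N=B_N^{-1/2}P_N$. Note, however, that the expression you correctly derive, $P_N^TD_N^{-1}B_N^{-1}P_N$, is not literally the one asserted in the corollary, which reads $P_N^TD^{-1}B_NP_N$ with $B_N$ un-inverted; your own cautionary remark about leaving $B_N$ un-inverted applies to the statement as printed (for $N=2$ one finds $P_2^TD_2^{-1}B_2P_2=4G_2^{-1}$ since $B_2=2I_2$), so you have in effect corrected a typo in the corollary rather than reproduced its stated formula.
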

\begin{proof}
    Since $Q_N$ is an orthogonal matrix formed from the eigenvectors of $G_N$ and $D_N$ is the associated matrix of eigenvalues, it follows that $G_N=Q_N^TD_NQ_N$. The expression is derived by applying the definition of $Q_N$ above and rules of matrix arithmetic.
\end{proof}
\begin{definition}
    The matrix $H_N$ is a real valued symmetric matrix of dimension $N>1$ with $-1-(N-2)\rho$ along the diagonal and $\rho$ everywhere else.
\end{definition}
\begin{theorem}\thelab{inv2}
    For $N>1$, the inverse of $G_N$ is $H_N/f_N$ where 
    \begin{equation}
    f_N=(N-1)\rho^2-(N-2)\rho-1.
    \end{equation}
\end{theorem}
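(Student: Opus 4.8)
The plan is to verify the inverse directly by showing that $G_N H_N = f_N I_N$, from which $G_N^{-1} = H_N/f_N$ follows at once whenever $f_N \neq 0$. The computation becomes transparent once one notices that both $G_N$ and $H_N$ share the same ``scalar-plus-dyad'' structure. Writing $J = \boldsymbol{1}_N\boldsymbol{1}_N^T$ for the all-ones matrix, the definition of the isotropic correlation matrix gives $G_N = (1-\rho)I_N + \rho J$, and reading off the diagonal entry $-1-(N-2)\rho$ and the off-diagonal entry $\rho$ from the definition of $H_N$ gives $H_N = -\{1+(N-1)\rho\}I_N + \rho J$. This is the decomposition of $H_N$ that the proof relies on.

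First I would record the one algebraic fact that closes the computation: because $\boldsymbol{1}_N^T\boldsymbol{1}_N = N$, the dyad satisfies $J^2 = N J$. Multiplying the two factors and using this identity, every resulting term is again a multiple of either $I_N$ or $J$, so the product collapses to the form $\alpha I_N + \beta J$, where $\alpha$ is the product of the leading scalar coefficients and $\beta$ gathers the two cross terms together with the $N\rho^2$ contribution arising from $\rho^2 J^2$.

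The proof then reduces to two scalar checks. The coefficient of $I_N$ is $\alpha = (1-\rho)\bigl(-\{1+(N-1)\rho\}\bigr) = (\rho-1)\{1+(N-1)\rho\}$, which is exactly the factorization of $f_N$ used elsewhere in the paper and expands to $(N-1)\rho^2-(N-2)\rho-1$, so $\alpha = f_N$. The coefficient of $J$ is $\beta = \rho(1-\rho) - \rho\{1+(N-1)\rho\} + N\rho^2$; I expect this to vanish identically, since the terms linear in $\rho$ cancel and the quadratic terms combine as $-\rho^2 - (N-1)\rho^2 + N\rho^2 = 0$. This cancellation is the only substantive step, and it is forced precisely by the choice of diagonal entry $-1-(N-2)\rho$ in $H_N$.

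Finally I would note that $G_N H_N = f_N I_N$ is an identity valid for every $\rho$, but passing to the inverse requires $f_N \neq 0$. By the factorization this rules out exactly $\rho = 1$ and $\rho = -1/(N-1)$, the two endpoints of the admissible interval and precisely the values at which $G_N$ is singular; for $\rho$ strictly inside $(1/(1-N),1)$ the inverse $H_N/f_N$ is well defined. As a consistency check one can expand $H_N/f_N$ and confirm it agrees with the expression $I_N/(1-\rho) - \rho J/\{(1-\rho)(1+(N-1)\rho)\}$ already quoted for $G_N^{-1}$, and with the eigendecomposition route of \corref{inv}.
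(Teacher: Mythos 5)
Your proposal is correct and takes essentially the same approach as the paper: the paper writes each row of $G_N$ as $(1-\rho)\boldsymbol{e}_i+\rho\boldsymbol{1}_N$ and each row of $H_N$ as $-\{1+(N-1)\rho\}\boldsymbol{e}_j+\rho\boldsymbol{1}_N$ and verifies $G_NH_N=f_NI_N$ entrywise via inner products, which is exactly your decomposition $G_N=(1-\rho)I_N+\rho J$, $H_N=-\{1+(N-1)\rho\}I_N+\rho J$ stated row by row, with your identity $J^2=NJ$ playing the role of the paper's $\boldsymbol{1}_N^T\boldsymbol{1}_N=N$. Your closing remarks on the singular values $\rho=1$ and $\rho=1/(1-N)$ match the paper's Corollary on singularity.
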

\begin{proof}
    For $N>1$ the rows of $G_N$ may be written as the transpose of the vectors $\boldsymbol{g}_i=(1-\rho)\boldsymbol{e}_i+\rho\boldsymbol{1}_N$ for $i\in[1,N]$, where $\boldsymbol{e}_i$ is the Euclidean basis vector in $\mathbb{R}^N$ with $1$ in the $i$th.\ row and $0$ everywhere else. Similarly the rows of $H_N$ may be written as the transpose of the vectors $\boldsymbol{h}_i=-\{1+(N-1)\rho\}\boldsymbol{e}_i+\rho\boldsymbol{1}_N$. Therefore the $(i,j)$ element of the matrix product $G_NH_N$ is
    \begin{align}
    \boldsymbol{g}_i^T\boldsymbol{h}_j=
    &\,\{(1-\rho)\boldsymbol{e}_i+\rho\boldsymbol{1}_N\}^T
    [-\{1+(N-1)\rho\}\boldsymbol{e}_j+\rho\boldsymbol{1}_N]\\
    =&-(1-\rho)\{1+(N-1)\rho\}\boldsymbol{e}_i^T\boldsymbol{e}_j
    +(1-\rho)\rho\boldsymbol{e}_i^T\boldsymbol{1}_N\\
    &-\{1+(N-1)\rho\}\rho\boldsymbol{1}_N^T\boldsymbol{e}_j
    +\rho^2\boldsymbol{1}_N^T\boldsymbol{1}_N\nonumber\\
    =&\,f_N\boldsymbol{e}_i^T\boldsymbol{e}_j
    \end{align}
    since $\boldsymbol{e}_i^T\boldsymbol{1}_N=1$ and $\boldsymbol{1}_N^T\boldsymbol{1}_N=N$. As $\{\boldsymbol{e}_i\}$ forms an orthogonal basis in $\mathbb{R}^N$, it immediately follows that
    \begin{equation}
        G_NH_N=f_NI_N\;\Rightarrow\;H_N/f_N=G_N^{-1}.
    \end{equation}
\end{proof}
\noindent For $N=1$ the inverse of $G_N$ is trivial.
\begin{corollary}\corlab{singular}
    For $N>1$, $G_N$ is singular if $\rho=1$ or $\rho=1/(1-N)$.
\end{corollary}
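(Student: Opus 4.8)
The plan is to read off singularity directly from the inverse formula just established. \Theref{inv2} shows that for $N>1$ we have $G_NH_N=f_NI_N$, and hence $G_N^{-1}=H_N/f_N$ whenever the scalar $f_N$ is nonzero. The factorization noted earlier, $f_N=(\rho-1)\{1+(N-1)\rho\}$, which expands to $(N-1)\rho^2-(N-2)\rho-1$, exhibits $f_N$ as a quadratic in $\rho$ whose two roots are exactly $\rho=1$ and $\rho=1/(1-N)$. At either root the inverse formula breaks down, so these are the candidate singular points.

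To turn the breakdown of the formula into an actual proof of singularity I would argue as follows. At a root of $f_N$ the relation $G_NH_N=f_NI_N$ becomes $G_NH_N=0$. If I can check that $H_N$ is not the zero matrix at that value of $\rho$, then some column of $H_N$ is a nonzero vector annihilated by $G_N$, so $G_N$ has a nontrivial kernel and is therefore singular. A one-line substitution confirms $H_N\neq0$ in both cases: at $\rho=1$ its diagonal entries equal $-(N-1)\neq0$, and at $\rho=1/(1-N)$ both the diagonal entry $-1/(N-1)$ and the off-diagonal entry $\rho$ are nonzero for $N>1$.

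An equivalent and perhaps cleaner route is via the eigendecomposition of \Theref{eigen}: the eigenvalues of $G_N$ are $1+(N-1)\rho$ once and $1-\rho$ with multiplicity $N-1$, so $G_N$ is singular exactly when one of these vanishes, giving $\rho=1/(1-N)$ from the first and $\rho=1$ from the second. Either way the two stated values are recovered, and this route has the bonus of making transparent that these are the \emph{only} singular points.

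The computation is entirely routine and I anticipate no real obstacle. The only point that deserves care is that the vanishing of the denominator $f_N$ in the inverse formula does not, by itself, establish singularity --- one must rule out a spurious $0/0$ by verifying that $H_N\neq0$ at these values (or, equivalently, invoke the eigendecomposition). That verification is the single step I would be sure not to skip.
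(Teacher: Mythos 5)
Your proof is correct and takes essentially the same route as the paper, whose entire argument is the one-line observation that $\rho=1$ and $\rho=1/(1-N)$ are the roots of the quadratic $f_N=0$. Your additional step of verifying that $H_N\neq0$ at these roots --- so that $G_NH_N=f_NI_N=0$ genuinely yields a nonzero vector in the kernel of $G_N$ --- closes a small logical gap that the paper's proof leaves implicit, and is exactly the right detail not to skip.
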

\begin{proof}
    These are the roots of the quadratic equation $f_N=0$.
\end{proof}
\section{Author's Statement on the Availability of Data and Code to Execute the Analysis Presented Herein}\seclab{data}
Empirical work presented here is executed in \textit{Python} code\cite{rossum1995python} using the standard ``open source'' toolkit (\textit{Pandas}\cite{mckinney2010data}, \textit{Numpy}\cite{harris2020array}, \textit{SciPy}\cite{scipy2020}) as found on Google's \textit{Colab} system\cite{googlecolab}. Analytical notebooks are archived on the author's personal \textit{GitHub} repository\cite{gillergithub} and the notebook \texttt{\smaller Index\_Pairwise\_Correlations.ipynb}, which may be found in the folder \texttt{\smaller Financial-Data-Science-in-Python}, is used for the analysis presented herein. This code base is under development, but the version control system presented by the \textit{GitHub} website permits the specific version in use to generate the figures and tables incorporated in this document to be extracted by users. Data on stocks is extracted programatically via the \texttt{\smaller yfinance} package\cite{yfinance} from sources made available to the general public by \textit{Yahoo! Inc.} and from \textit{Wikipedia}\cite{wiki2024spx}.
%
%
\bibliographystyle{plain}
\bibliography{citation}

\begin{thebibliography}{10}

\bibitem{arfken1999mathematical}
George~B Arfken and Hans~J Weber.
\newblock {\em {Mathematical methods for Physicists}}.
\newblock Academic Press Inc., 1985.

\bibitem{yfinance}
Ran Aroussi.
\newblock {yfinance GitHub Repository}.
\newblock \url{https://github.com/ranaroussi/yfinance}, 2019.

\bibitem{axler2015linear}
Sheldon Axler.
\newblock {\em {Linear Algebra Done Right}}.
\newblock Springer, 2015.

\bibitem{bodie2000finance}
Z~Bodie and R~Merton.
\newblock {\em {Finance}}.
\newblock Prentice-Hall Inc., 2000.

\bibitem{carhart1997persistence}
Mark~M Carhart.
\newblock On {P}ersistence in {M}utual {F}und {P}erformance.
\newblock {\em The Journal of finance}, 52(1):57--82, 1997.

\bibitem{kstest}
W.T. Eadie, D.~Drijard, F.E. James, M.~Roos, and B.~Sadoulet.
\newblock {\em Statistical Methods in Experimental Physics}.
\newblock Amsterdam: North-Holland, 1971.

\bibitem{engle2001theoretical}
Robert~F Engle and Kevin Sheppard.
\newblock {Theoretical and Empirical Properties of Dynamic Conditional Correlation Multivariate GARCH}.
\newblock {\em NBER Working Paper Series}, 2001.

\bibitem{fama1973risk}
Eugene~F Fama and James~D MacBeth.
\newblock Risk, return, and equilibrium: Empirical tests.
\newblock {\em Journal of political economy}, 81(3):607--636, 1973.

\bibitem{fisher1915frequency}
Ronald~A Fisher.
\newblock Frequency {D}istribution of the {V}alues of the {C}orrelation {C}oefficient in {S}amples of an {I}ndefinitely {L}arge {P}opulation.
\newblock {\em Biometrika}, 10(4):507--521, 1915.

\bibitem{giller2023countries}
Graham Giller.
\newblock {It Turns out Other Countries Aren’t Normal Either}.
\newblock {\em Wilmott}, 2024(131), 2024.

\bibitem{giller2004frictionless}
Graham~L Giller.
\newblock {Frictionless Asset Allocation with Elliptically Symmetric Distributions of Returns}.
\newblock {\em Available at SSRN 1300671}, 2004.

\bibitem{giller2022adventures}
Graham~L. Giller.
\newblock {\em {Adventures in Financial Data Science: The Empirical Properties of Financial and Economic Data}}.
\newblock World Scientific, Singapore, 2nd. edition, 2022.

\bibitem{gillergithub}
Graham~L. Giller.
\newblock {\textit{GitHub} Repository}, 2022.
\newblock \url{https://www.github.com/Farmhouse121}.

\bibitem{giller2022normal}
Graham~L. Giller.
\newblock {T}he {N}ormal {D}istribution {D}oesn’t {W}ork, it’s {T}ime to {S}top {U}sing it!
\newblock {\em Wilmott}, 2022(121):56--61, 2022.

\bibitem{giller2024laplace}
Graham~L. Giller.
\newblock {An Analytic Solution for Asset Allocation with a Multivariate Laplace Distribution}.
\newblock {\em Available at \url{https://arxiv.org/abs/2411.08967}}, 2024.

\bibitem{grinold2000active}
Richard~C Grinold and Ronald~N Kahn.
\newblock {\em {Active Portfolio Management}}.
\newblock McGraw Hill New York, NY, 2000.

\bibitem{harris2020array}
Charles~R. Harris, K.~Jarrod Millman, St{\'{e}}fan~J. van~der Walt, Ralf Gommers, Pauli Virtanen, David Cournapeau, Eric Wieser, Julian Taylor, Sebastian Berg, Nathaniel~J. Smith, Robert Kern, Matti Picus, Stephan Hoyer, Marten~H. van Kerkwijk, Matthew Brett, Allan Haldane, Jaime~Fern{\'{a}}ndez del R{\'{i}}o, Mark Wiebe, Pearu Peterson, Pierre G{\'{e}}rard-Marchant, Kevin Sheppard, Tyler Reddy, Warren Weckesser, Hameer Abbasi, Christoph Gohlke, and Travis~E. Oliphant.
\newblock Array programming with {NumPy}.
\newblock {\em Nature}, 585(7825):357--362, September 2020.

\bibitem{kelly1956new}
J.~L. {Kelly}.
\newblock {A New Interpretation of the Information Rate}.
\newblock {\em {The Bell System Technical Journal}}, 35(4):917--926, 1956.

\bibitem{kendall1999vol1}
M~Kendall, A~Stuart, J~Ord, and S~Arnold.
\newblock {\em {Kendall's Advanced Theory of Statistics}}, volume 1: Distribution theory.
\newblock Arnold, 1999.

\bibitem{mardia1979}
KV~Mardia, JT~Kent, and JM~Bibby.
\newblock {\em Multivariate {A}nalysis}.
\newblock Academic Press, 1979.

\bibitem{harry1952markowitz}
Harry~M. Markowitz.
\newblock {Portfolio Selection}.
\newblock {\em {Journal of Finance}}, 7(1), 1952.

\bibitem{mckinney2010data}
Wes McKinney et~al.
\newblock {Data Structures for Statistical Computing in Python}.
\newblock In {\em {Proceedings of the 9th.\ Python in Science Conference}}, volume 445, pages 51--56, 2010.

\bibitem{ross2013arbitrage}
Stephen~A Ross.
\newblock {The Arbitrage Theory of Capital Asset Pricing}.
\newblock In {\em {Handbook of the Fundamentals of Financial Decision Making: Part I}}, pages 11--30. World Scientific, 2013.

\bibitem{sharpe1964capm}
William~F. Sharpe.
\newblock {C}apital {A}sset {P}rices: {A} theory of market equilibrium under conditions of risk.
\newblock {\em The Journal of Finance}, 19(3):425--442, 1964.

\bibitem{googlecolab}
The {Google}~{Colab} team.
\newblock Colab.
\newblock \url{https://colab.research.google.com}.

\bibitem{thorndike1953family}
Robert~L. Thorndike.
\newblock {Who Belongs in the Family?}
\newblock {\em Psychometrika}, 18(4):267--276, 1953.

\bibitem{thorp2011kelly}
Edward~O Thorp.
\newblock {The Kelly Criterion in Blackjack, Sports Betting, and the Stock Market}.
\newblock In William~T. Ziemba, Leonard~C. MacLean, and Edward~O. Thorp, editors, {\em {The Kelly Capital Growth Investment Criterion: theory and practice}}, pages 789--832. World Scientific, 2011.

\bibitem{rossum1995python}
G.~van Rossum.
\newblock {Python Tutorial}.
\newblock Technical Report CS-R9526, Centrum voor Wiskunde en Informatica (CWI), Amsterdam, May 1995.

\bibitem{scipy2020}
Pauli Virtanen, Ralf Gommers, Travis~E. Oliphant, Matt Haberland, Tyler Reddy, David Cournapeau, Evgeni Burovski, Pearu Peterson, Warren Weckesser, Jonathan Bright, St{\'e}fan~J. {van der Walt}, Matthew Brett, Joshua Wilson, K.~Jarrod Millman, Nikolay Mayorov, Andrew R.~J. Nelson, Eric Jones, Robert Kern, Eric Larson, C~J Carey, {\.I}lhan Polat, Yu~Feng, Eric~W. Moore, Jake {VanderPlas}, Denis Laxalde, Josef Perktold, Robert Cimrman, Ian Henriksen, E.~A. Quintero, Charles~R. Harris, Anne~M. Archibald, Ant{\^o}nio~H. Ribeiro, Fabian Pedregosa, Paul {van Mulbregt}, and {SciPy 1.0 Contributors}.
\newblock {{SciPy} 1.0: Fundamental Algorithms for Scientific Computing in Python}.
\newblock {\em Nature Methods}, 17:261--272, 2020.

\bibitem{wiki2024spx}
Wikipedia.
\newblock {List of S\&P 500 Companies}.
\newblock Website, 2024.
\newblock \url{https://en.wikipedia.org/wiki/List_of_S%26P_500_companies}.

\end{thebibliography}
\end{document}